\newenvironment{myitem}{
\begin{itemize}
\setlength{\parskip}{0pt}
\setlength{\itemsep}{0pt}
\setlength{\partopsep}{0pt}
\setlength{\parskip}{0pt}
\setlength{\topsep}{0pt}
\setlength{\parsep}{0pt}}{\end{itemize}
}
\newcommand{\ourSketch}{SBG-Sketch\xspace}
\newtheorem*{nonumtheorem}{Theorem}
\newtheorem{theorem}{Theorem}
\begin{document}
\title{
\ourSketch{}: A Self-Balanced Sketch for \\ Labeled-Graph Stream Summarization
}

\numberofauthors{1}
\author{
\alignauthor
Mohamed S. Hassan, Bruno Ribeiro, Walid G. Aref\\
\affaddr{Purdue University, West Lafayette, IN, USA}\\
\email{{\large\{}msaberab,~ribeiro,~aref{\large\}}@cs.purdue.edu}
}

\maketitle

\sloppypar
\abstract
Applications in various domains rely on processing graph streams, e.g., communication logs of a cloud-troubleshooting system, road-network traffic updates, and interactions on a social network. A labeled-graph stream refers to a sequence of streamed edges that form a labeled graph. Label-aware applications need to filter the graph stream before performing a graph operation. 
Due to the large volume and high velocity of these streams, it is often more practical to incrementally build a lossy-compressed version 
of the graph, and use this lossy version to approximately evaluate graph queries. Challenges arise when the queries are unknown in advance but are associated with filtering predicates based on edge labels.
Surprisingly common, and especially challenging, are labeled-graph streams that have highly skewed label distributions that might also vary over time. 
This paper introduces {\em S}elf-{\em B}alanced {\em G}raph {\em Sketch} (\ourSketch{}, for short), a graphical sketch for summarizing and querying labeled-graph streams that can cope with all these challenges. 
\ourSketch{} maintains synopsis for both the edge attributes (e.g., edge weight) as well as the topology of the streamed graph.
\ourSketch{} allows efficient processing of graph-traversal queries, e.g., reachability queries.
Experimental results over a variety of real graph streams show
\ourSketch{}
to reduce the estimation errors of state-of-the-art methods by up to $99\%$.

\section{Introduction}
\label{sec:Introduction}
The ubiquity of high-velocity data streams that form graph structures gives rise to many interesting graph queries, and consequently, query processing challenges. A graph stream refers to a data stream of tuples representing graph edges that can form a graph structure. For example, in a cloud-troubleshooting application, a graph stream can be a sequence of edges representing communication logs among the cloud's machines. Each communication log is a directed edge from a sender to a receiver, where edges have labels, e.g., the communication-protocol used. In this paper, we focus on these labeled-graph streams 
that 
raise interesting data-management challenges.

More precisely, a labeled-graph stream is a graph stream where each edge is associated with a a categorical attribute (label). Associating labels to edges helps in defining and evaluating constrained graph queries, where a query filters the stream using the edge labels before query evaluation. 
Consider the following real-world queries on labeled-graph streams:

\noindent
{\bf $\bullet$~Communication Networks}:
Cloud-environment operators usually analyze the communication-log stream to perform real-time troubleshooting. A typical communication-log entry describes a communication between two machines, namely the source and the destination machines, as well as some communication attributes, e.g., the round-trip time, the sender's application id. This stream forms a labeled-graph stream, where an edge's label is the sender application id. A cloud-troubleshooting application, say $A$, may issue a constrained reachability-query to detect if messages created by Application~$A$ from a source machine reach a destination machine. This reachability query is constrained to use only edges that represent messages created by Application~$A$ (i.e., Label~$A$). 


\noindent
{\bf $\bullet$~Social Networks}:
A social network may need to detect trending activities of a given object type (e.g., picture, video, status). A graph stream may describe user activities w.r.t.\ these social-network objects, e.g., User $U_i$ shares Post $P_j$. The edge labels represent activity types, e.g., comment, share, or like. The social network may detect trending posts or objects w.r.t.\ specific activity types, e.g., find the most re-shared post.


Labeled-graph streams in applications like the aforementioned ones are usually of large volumes and high velocities.
For example, a cloud-troubleshooting application of a commercial cloud-service receives a labeled-graph stream at the rate of $9$~million edges per second (i.e., a stream query acting on a one-minute window has $0.54$~billion edges to process).
Moreover, low-latency for processing queries becomes necessary in many applications, e.g., when detecting security-threats in real-time. Hence, it is practical to summarize a graph stream by incrementally building a smaller stream-synopsis.
This addresses the data-volume challenge, where bounded-memory is allocated to summarize the continuously-arriving edges of a high-volume graph stream.
In addition, the low-latency requirement may be addressed by approximating the query results instead of producing exact answers.

Summarizing {\em labeled}-graph streams has additional challenges. The labels of the streamed edges are unevenly distributed.
Thus, it is common in 
these 
streams to find frequent labels or infrequent ones. The uneven distribution of edge labels may not be known beforehand and may change over time.
For example, a cloud-troubleshooting application will have a communication-log graph stream with edges representing communication types that are more frequent than other types (e.g., HTTP communication-log entries may dominate).
This imbalance raises a challenge in summarizing such a stream, where no edge type (label) should be penalized w.r.t. summarization accuracy due to the rareness or the relative high frequency of its label.

In this paper, we present {\ourSketch}, a graphical sketching technique that automatically balances the sketch load according to the relative frequency of edge labels without penalizing edges with rare labels. Given a labeled-graph stream, say~$G$, and a fixed memory-size, say~$\text{Mem}_\text{max}$, \ourSketch{} uses $\text{Mem}_\text{max}$ memory to incrementally summarize both the attributes of the edges of $G$ as well as the topology of the graph formed by Stream~$G$.

The main idea of \ourSketch{} is to allow edges of 
high-frequency 
labels to automatically leverage unused memory previously assigned to 
low-frequency
labels with a guarantee that edges of low-frequency labels can use that memory whenever needed in the future.
Notice that bounding the memory allocated to handle a graph stream is important.
The benefit of this memory-bounding is twofold. First, edges arrive continuously with large volume and high velocity in many applications, where storing all 
edges
is impractical in many scenarios. Second, query time-efficiency will enhance as queries will process 
a
bounded-synopsis 
that
is much smaller than the raw graph stream.

The contributions of this paper are as follows:
\begin{myitem}
\item We introduce the design of \ourSketch{}. \ourSketch efficiently summarizes labeled-graph streams and automatically balances sketch load in streams with unpredictable and highly imbalanced edge-label frequencies, all without penalizing edges with rare labels. 

\item The design of \ourSketch{} is of interest on its own as it represents a departure from the count-min design of Cormode and Muthukrishnan~\cite{CountMin_Algo_2005} used in some of the most recent works in graphical sketching~\cite{gSketch_VLDB_2011,TCM_SIGMOD_2016}, and thus enabling new applications. The main focus of this paper is 
to enable applications that call for labeled-graph stream sketching for skewed label-distributions.

\item We show that \ourSketch{} can give an approximate answer to graph queries constrained to sets of edge labels. We demonstrate the use of \ourSketch to compute a variety of approximate queries, reachability queries (with no false-negatives), edge count queries, and sub-graph queries. These queries can serve a wide spectrum of applications.

\item We conduct extensive experiments using three real-world datasets from different domains. Results demonstrate that \ourSketch{} can effectively 
summarize 
labeled-graph streams, and 
effectively estimate constrained graph-queries. Moreover, we show that \ourSketch{} significantly outperforms the state-of-the-art graph sketch method~\cite{TCM_SIGMOD_2016} w.r.t. estimation accuracy.
\end{myitem}

The rest of this paper proceeds as follows.
Section~\ref{sec:GraphStreamModel} defines the model we follow for graph streams of labeled edges.
Section~\ref{sec:ProblemDefinition} identifies the requirements that an effective sketching method should satisfy 
in order 
to handle labeled-graph streams, while Section~\ref{sec:SolutionApproach} introduces our 
approach.
Section~\ref{sec:sketchStructure} presents the structure of \ourSketch{} as well as the general logic for updating the sketch upon edge arrivals.
Section~\ref{sec:queryEstimation} demonstrates how \ourSketch{} can 
estimate some important constrained-queries on graph streams.
Section~\ref{sec:ExperimentalEvaluation} presents the experimental evaluation of \ourSketch{}.
The related work is discussed in Section~\ref{sec:RelatedWork}. Finally, Section~\ref{sec:Conclusion} contains concluding remarks.

\section{Graph Modeling and Problem Definition}
\subsection{The Graph-Stream Model}
\label{sec:GraphStreamModel}

We model a labeled-graph stream, say $G$, as a data stream of labeled edges $(e_1, e_2,~\ldots,~e_m)$. This graph stream forms Graph~$G~=~(V,~E,~L)$, where $V$ is the vertex set of $G$, $E$ is the edge set formed by the streamed edges, and $L$ is the set of distinct edge labels.
A streamed edge, say $e_i$, is defined as $e_i~=~(s_i,~d_i,~l_i,~w_i)$, where $s_{i}~\in~V$, $d_{i}~\in~V$, $l_{i}~\in~L$, and $w_{i}~\in~Real$, are the source vertex, the destination vertex, the label, and the weight (real number) of Edge~$e_i$, respectively. For simplicity, we assume that the graph edges are directed. However, all the techniques presented in this paper can be applied to undirected graphs.
Figure~\ref{Fig:ExampleLabeledGraph} gives a sample graph-stream of nine labeled-edges being streamed. For example, the edge from Vertex~$g$ to Vertex~$f$ is the result of receiving the following stream element $(g,~f,~R,~1)$.

\begin{figure}[t]
\centering
\includegraphics[width=2.0in]
{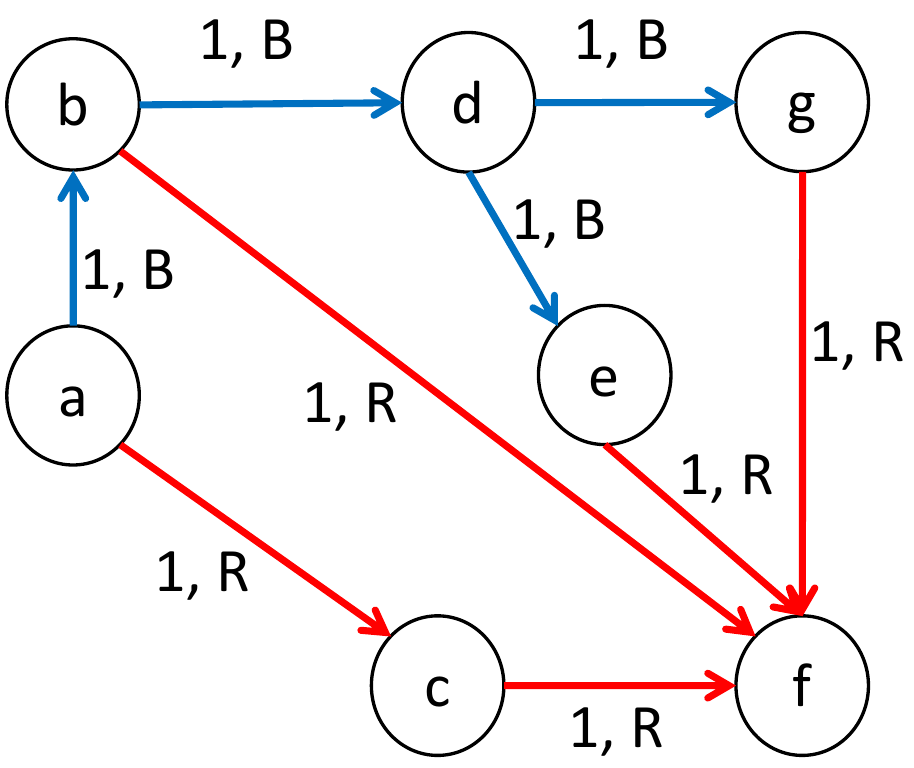}
\caption{Edge-labeled Graph $G$ with each edge having two values: $Weight,~Label$. $R$, and $B$ are two different labels.}
\label{Fig:ExampleLabeledGraph}
\end{figure}

\subsection{Problem Definition and Solution Requirements}
\label{sec:ProblemDefinition}

Given a labeled-graph stream, say $G$, where 
$G = \langle(a_1,~b_1,~l_1,~w_1),~\ldots,~(a_m,~b_m,l_m,w_m)\rangle$, the number of distinct edge-labels, say $L$, and a memory-size upper-bound, say $\text{Mem}_\text{max}$, 
we need to 
construct a graphical sketch, say $S(G)$, that satisfies the following requirements:

\noindent
{\bf $\bullet$}~Construct an in-memory synopsis that does not exceed $\text{Mem}_\text{max}$ of memory.

\noindent
{\bf $\bullet$}~Summarize the edge weights of Stream~$G$ using an aggregate function defined by the application.

\noindent
{\bf $\bullet$}~Summarize the topology of Stream~$G$ to support graph-traversal queries (e.g., reachability estimation).

\noindent
{\bf $\bullet$}~Consider the edge labels in the summary to support constrained graph-queries.

\noindent
{\bf $\bullet$}~Consider the imbalance in the distribution of edges w.r.t labels.
An edge-label should be allowed to use larger quota from the allocated fixed-memory if its edges are more frequent than 
those corresponding to 
other labels. 
This requirement aims that all the edges win or achieve high summarization-accuracy regardless of their label rareness or popularity.


The last requirement is important in real-world scenarios, where the edges are unevenly distributed w.r.t.\ their labels.
For instance, consider a cloud-troubleshooting application, where the streamed edges are labeled by application identifiers. The messaging frequency of some applications can be much higher than those of other applications.
Hence, a sketching method handling this graph-stream model should not penalize the accuracy of summarizing edges due to the rareness of some labels. Moreover, avoiding this penalization should consider using memory wisely (e.g., avoid allocating exclusive large-memory shares to less-representative labels).

\section{Overview of \ourSketch{}}
\label{sec:SolutionApproach}

\begin{figure}[t]
\centering
\includegraphics[width=2.4in]{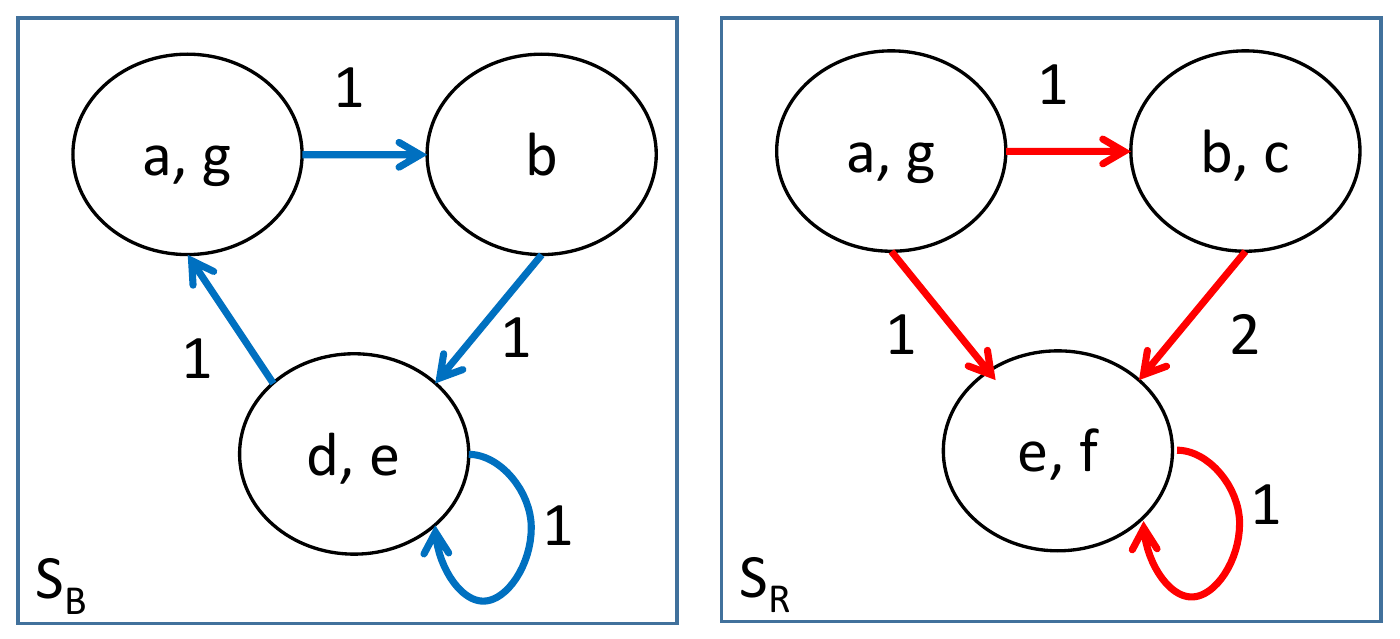}
\caption{A label-aware summary of the Graph in Figure~\ref{Fig:ExampleLabeledGraph}.}
\label{Fig:LabelAwareSketch}
\end{figure}

Given a graph stream as defined in Section~\ref{sec:GraphStreamModel}, our 
approach is to build a sketch that satisfies the requirements stated in Section~\ref{sec:ProblemDefinition}. To illustrate, consider the sample graph stream $G$ in Figure~\ref{Fig:ExampleLabeledGraph}. The proposed \ourSketch graphical sketch follows the structure that Figure~\ref{Fig:LabelAwareSketch} illustrates, where we assume that the sketch is built to aggregate the weights of the received edges by summing them (other aggregates are possible). For each distinct edge-label, say $l$, we allocate a sub-sketch 
$S_l$ 
that summarizes
the sub-graph of all the edges of Label~$l$.
For instance, Figure~\ref{Fig:LabelAwareSketch} gives two sub-sketches, namely $S_B$ and $S_R$, that summarize the $Blue$, and the $Red$ edges, respectively. This allows the graphical sketch to evaluate label-constrained queries. For instance, a query allowing only $Blue$ edges will only consult Sub-sketch~$S_B$. Notice that the total size of the sketch is upper-bounded by the maximum memory-size defined by the user that affects the size of each sub-sketch.

The idea of the graphical sketch in Figure~\ref{Fig:LabelAwareSketch} is to build a sub-graph for each edge label, say~$l$, by compressing the edges of Label~$l$. In particular, each sub-sketch, say~$S_l$, has a maximum number of vertexes, say $v(S_l)$, that is smaller than the number of vertexes of the original graph stream. The graphical sketch uses a hash function to project the vertexes of the original graph-stream to the vertexes in a sub-sketch. For instance, the hash function groups Vertexes~$a$ and~$g$ 
in both $S_B$ and $S_R$ (assuming that both sub-sketches use the same hash function). To illustrate how the edge weights are aggregated, consider the arrival of Edge~$E_1=(b,~f,~R,~1)$, and Edge~$E_2=(c,~f,~R,~1)$, where they affect only Sub-sketch $S_R$ as they are both red edges. Assume that the vertex-mapping hash function groups Vertex~$b$ and Vertex~$c$ into one bucket, and Vertex~$e$ and Vertex~$f$ in another bucket (i.e., the same sub-sketch vertex).
When processing Edge~$E_1$, an edge of weight~$1$ will be created in Sub-sketch $S_R$ between Vertexes~$b$ and~$f$. Then, when inserting Edge~$E_2$, the sub-sketch edge that has been created by $E_1$ will have its weight incremented by one (i.e., accumulating the weight of $E_2$). The reason is that the start vertexes of both $E_1$ and $E_2$ are mapped together, and similarly for their end vertexes.

Observe that the graphical sketch in Figure~\ref{Fig:LabelAwareSketch} summarizes both the edge weights as well as the graph-stream topology. For instance, a query asking for the weight of $Edge(a,~b,~B)$ can be evaluated by consulting Sub-sketch~$S_B$ by hashing the endpoint vertexes of the query, mapping them to vertexes in $S_B$, and retrieving the weight of the corresponding edge in $S_B$. Also, the graphical sketch summarizes the topology of the graph stream to allow graph-traversal queries. For instance, a reachability query inquiring if Vertex~$d$ is reachable from Vertex~$a$ using only $Blue$ edges 
evaluates to 
true because there is a path connecting the two vertexes in Sub-sketch~$S_B$ (i.e., $(a,~g)~\rightarrow~(b)~\rightarrow~(d,~e)$). Another example is a pattern query that estimates if there is a path of two edges from Vertex~$a$ to Vertex~$f$, where the first edge is of Label~$B$, and the second is of Label~$R$. This is possible by expanding Edge~$(a,~g)~\rightarrow~(b)$ by checking the outgoing edges from Vertex~$b$ in the $Red$ sub-sketch and discovering Edge~$(b,~c)~\rightarrow~(e,~f)$. This forms a positive answer 
because 
a path of two valid connected-edges from the two sketches satisfy the query.

\section{The Design of \ourSketch{} }
\label{sec:sketchStructure}

In this section, we highlight the general structure of \ourSketch{}. Given a labeled graph, say $G$, we create an \ourSketch{} instance, 
say say $S_G$, 
that summarizes Graph $G$. Sketch $S_G$ considers the topology of 
$G$ so that approximating graph-traversal queries becomes possible. 
Assume that 
$G$ has $n$ vertexes, $m$ edges, and $L$ distinct edge labels. Then, we create Sketch $S_G$ that has $L$ matrices, where each matrix is a $d \times d$ two-dimensional matrix. Notice that $d$ is much smaller than $n$, the number of vertexes in the graph stream.
Also, $L$ is much smaller than $n$ and $m$ in real labeled-graphs, e.g., the number of interaction types in a protein-interaction network is much smaller than the number of proteins.
This forms a three-dimensional matrix of dimensions $L \times d \times d$. 
Notice that 
it is possible 
to create multiple independent sketches to summarize Graph~$G$ for better accuracy 
(see Section~\ref{sec:ExpEdgeQueries_VaryHash}). 

Figure~\ref{Fig:GeneralSketchStructure} illustrates the general structure of 
\ourSketch{}, 
where $P$ independent sketches can be created to summarize a graph.
For illustration,
assume that $P~=~1$ (i.e., we have only one sketch).

Consider \ourSketch{} $S_G$ for Graph~$G$. Each cell in $S_G$ maintains an aggregate for a set of streamed edges as Figure~\ref{Fig:GeneralSketchStructure} illustrates. The maintenance of this aggregate may differ based on the query type that $S_G$ is supposed to answer (e.g., a counter to answer edge-frequency queries). An incoming edge is hashed into one of the cells in $S_G$ as 
explained in
Section~\ref{sec:MappingEdgesToCells}. Notice that if multiple sketches are used, each sketch will have a different hash function to hash the vertexes. 
Observe that each cell holds a pair of values, namely rank and aggregate. Section~\ref{sec:RankingLogic} elaborates on how the rank values are used, while Section~\ref{sec:queryEstimation} shows how the aggregate values are maintained for 
various 
query types. In the next section, we focus on how the streamed edges are mapped to sketching cells. 

\begin{figure}[t]
\centering
\includegraphics[width=3.4in]
{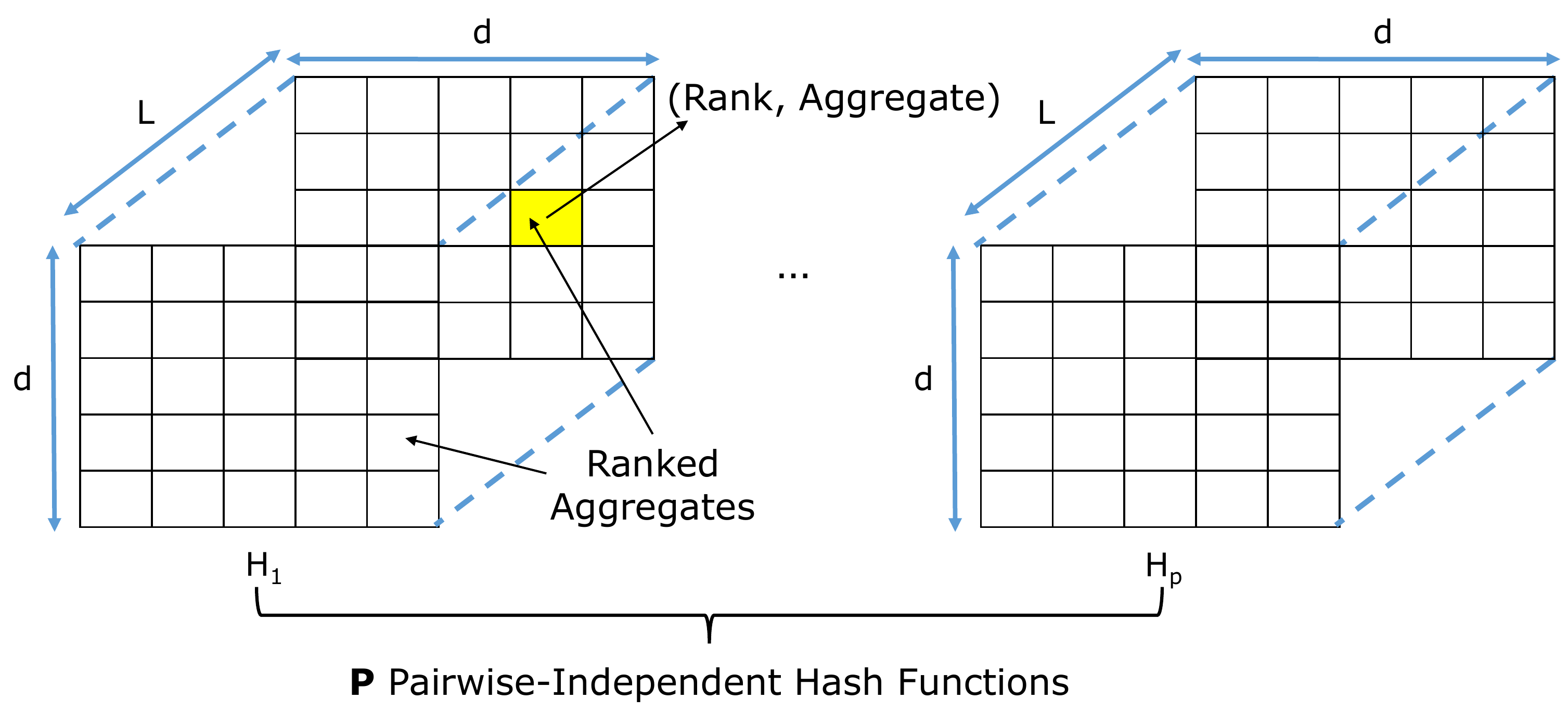}
\caption{The general structure of \ourSketch{}.}
\label{Fig:GeneralSketchStructure}
\end{figure}

\subsection{Mapping Streamed Edges to Sketching Cells}
\label{sec:MappingEdgesToCells}
Mapping a streamed edge to a sketching cell is a fundamental operation to update the sketch. Mapping an edge to a cell is orthogonal to the sketch update logic that depends on the query type supported by the sketch. To illustrate how streamed edges are mapped to cells in \ourSketch, 
refer to Figure~\ref{Fig:GeneralEdgeToCellMapping} that shows a single sketch $S_G$.
\ourSketch{} generates and uses a set of hash functions. One of these hash functions, namely $H_v$, maps each vertex identifier to an integral value in the range [0,~d-1], i.e., $H_v$ can map any vertex to a row or column of any matrix of the $L$ matrices of Sketch $S_G$. If multiple sketches, say $P$ sketches, are used, then $P$ different pairwise-independent hash functions are generated and are used (i.e., one hash function 
per
sketch). To allow traversing the matrices of a single sketch efficiently, \ourSketch uses the same hash function $H_v$ in all the matrices.

Recall that the number of matrices in a sketch is equal to the number of distinct labels of a graph, and we assume that the distinct edge-labels are known beforehand (e.g., the different types of social relationships in a social network).
Refer to Figure~\ref{Fig:GeneralEdgeToCellMapping}. An incoming stream Edge $E$~=~$(a,~b,~l)$ is mapped as follows. First, \ourSketch{} has a static one-to-one-mapping for each label to a corresponding matrix in the Sketch. So, Edge~$E$ will be mapped to one of the cells in the matrix corresponding to Label~$l$, say $M_l$. Hash-function $H_v$ maps Source-vertex~$a$, and Destination-vertex~$b$ to a row, and a column in Matrix~$M_l$, respectively. So, the cell corresponding to Edge~$E$ is Cell ($H_v(a)$, $H_v(b)$) in Matrix~$M_l$, or Cell ($H_v(a)$, $H_v(b)$, $l$), for short. Notice that using the same hash function in all the rows and columns of all the matrices allows traversing the matrices of the sketch efficiently, otherwise, materializing the hash functions as in~\cite{TCM_SIGMOD_2016} would be necessary and additional memory would be consumed from the allocated memory.
For example, to traverse the outgoing edges of the end-vertex of Edge~$E$ (i.e., Vertex~$b$), where the edges are labeled by Label~$i$, we can check the second row of Matrix $M_i$. The reason is that Vertex~$b$ has been mapped by Function $H_v$ to the second column as Figure~\ref{Fig:GeneralEdgeToCellMapping} illustrates, and that all the matrices are adjacency matrices using the same $H_v$ function.

\begin{figure}[t]
\centering
\includegraphics[width=2.6in]
{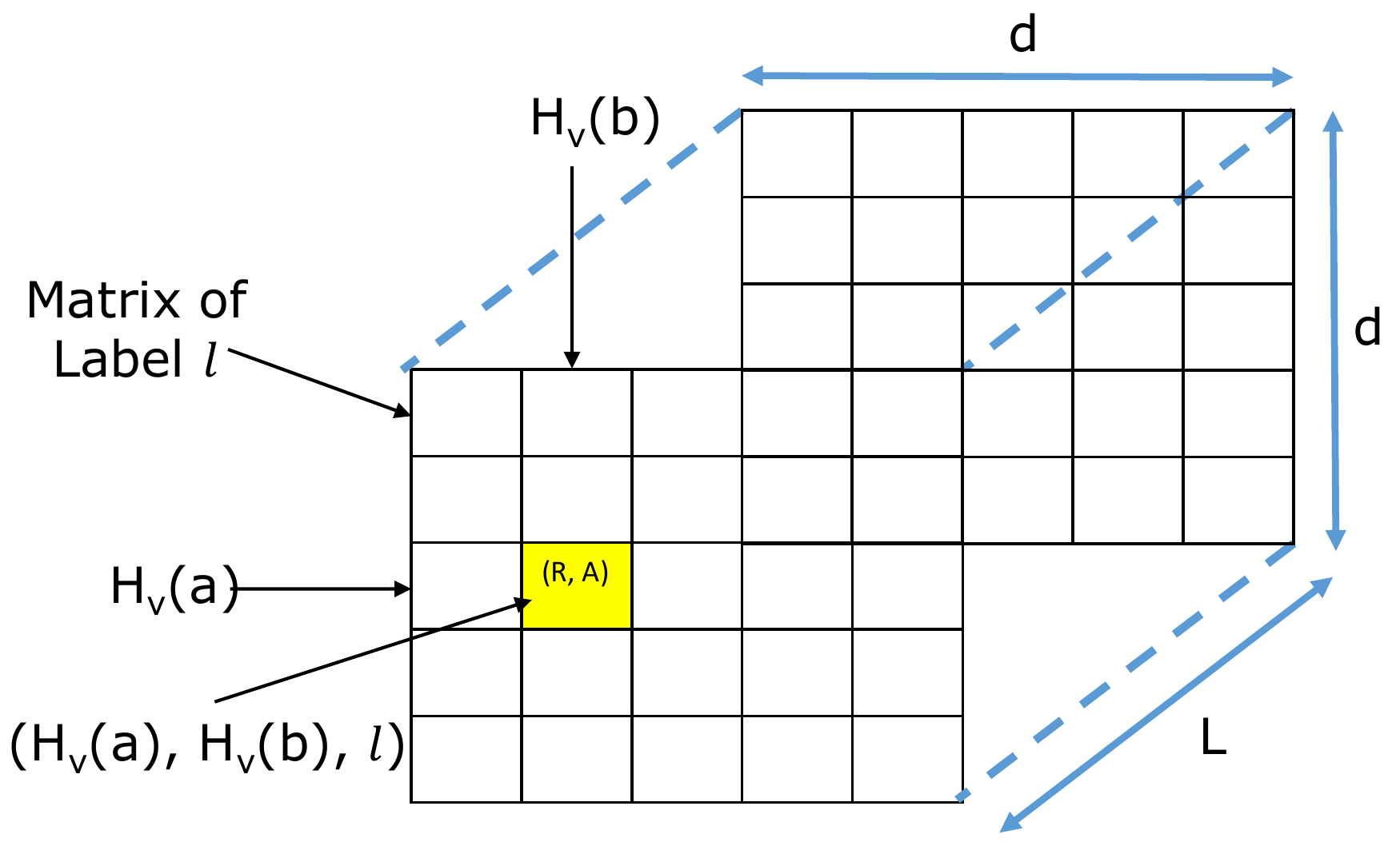}
\caption{Mapping a labeled-edge stream input to a cell in \ourSketch{}.}
\label{Fig:GeneralEdgeToCellMapping}
\end{figure}

\subsection{The Ranking Logic in \ourSketch}
\label{sec:RankingLogic}

Usually, 
edge-labeled graphs are 
skewed in numbers and are unbalanced w.r.t. the frequency of edges per distinct edge-label. For instance, in a social network, the number of family-type relationships may be much less than the friend-type relationships.
This adds a challenge when building graph sketches for edge-labeled graphs. In particular, memory for summarizing edges of a specific label, say $l$, should be proportional to the frequency of receiving edges of Label~$L$. Otherwise, precious parts of the sketch would be wasted (i.e., those matrices corresponding to low-frequency labels would have wasted cells). The challenge becomes more difficult when the frequency of labels is not known beforehand. 
In this case, 
initializing the matrices with different dimensions may become difficult or inaccurate.

\ourSketch{} addresses this challenge without requiring to know the relative edge-label frequencies beforehand. The main idea of \ourSketch{} is to allocate matrices of the same dimensions to all the labels, and to allow an edge, say $E~=~(a,~b,~l_0)$, of Label~$l_0$ to use cells in matrices that do not correspond to Label~$l_0$. The intuition 
behind this approach
is to allow high-frequent labels (e.g., $l_0$) to use other matrices corresponding to 
low-frequency
labels. However, \ourSketch{} guarantees that the low-frequency labels can reclaim their cells that were occupied by high-frequency labels whenever needed. To illustrate, consider an edge-labeled graph, say $G_R$, with a total of five labels, i.e., L~=~5. Let $S_R$ be \ourSketch{} for Graph $G_R$, where $S_R$ consists of five matrices, namely, $M_{0}$, $M_{1}$, $M_{2}$, $M_{3}$, and $M_{4}$ as in Figure~\ref{Fig:GeneralRankingLogic}.
Upon receiving Edge~$E$, \ourSketch{} assigns a rank vector to Edge~$E$ before updating the sketch. In Section~\ref{sec:RankGeneration}, we discuss how the rank vectors are generated and 
are 
assigned to edges. For now, it is sufficient to know that the values of a rank vector 
are
a permutation of the values $\{0, 1, ..., |L|-1\}$, where $|L|$ is the number of matrices in the sketch (i.e., the number of labels). For example, Figure~\ref{Fig:GeneralRankingLogic} illustrates that Edge~$E$ is assigned a ranking Vector, say $RV_E$, of values [0, 2, 1, 4, 3], where 0 is the highest rank, and 4 is the lowest rank.

An element, say $RV[i]$, of a rank-vector for an edge determines the rank of the edge in Matrix~$M_i$. For instance, Figure~\ref{Fig:GeneralRankingLogic} illustrates that the rank of Edge~$E$ in Matrix $M_1$, i.e., $RV_{E}[1]$, is equal to 2.
Notice that each cell, say $C$, in the sketch stores a rank value, say $R(C)$. Rank value $R(C)$ represents the rank of the last edge that has updated Cell~$C$. For example, in Figure~\ref{Fig:GeneralRankingLogic}, the yellow cell in the top-left Matrix $M_0$ has a rank value of~1, which means that the last edge that has updated this cell has a rank value of 1 in Matrix~$M_0$. Whenever an edge, say $E$, is hashed into a cell, say $C$, the rank of Cell~$C$ as well as the rank of Edge~$E$ in the matrix hosting Cell~$C$ 
determines
if Edge~$E$ can use Cell~$C$.
In particular, a streamed edge can use and affect Cell~$C$ if and only if the edge's rank is higher than or equal to the current rank of Cell~$C$. Comparing rank value $RV_{E}[i]$ (i.e., the rank for Edge~$E$ in Matrix~$M_i$) to the rank value of a cell in Matrix~$M_i$, say $R(C)$, leads to the following three cases with three possible outcomes (notice that zero is the highest rank --  refer to Figure~\ref{Fig:GeneralRankingLogic} for illustration):

\noindent
{\bf $\bullet$~Evict and Occupy:}
If $RV_{E}[i]$~\textgreater~$R(C)$, then evict the effect of all the edges that have affected Cell~$C$, use $C$ to update Matrix~$M_i$ by the arrival of Edge~$E$, and set the rank of $C$ to the value of $RV_{E}[i]$. This prevents any edge of rank lower than $RV_{E}[i]$ to evict Edge~$E$ from Cell~$C$. For instance, in Figure~\ref{Fig:GeneralRankingLogic}, Edge~$E$ has a higher rank in Matrix~$M_0$ than the last edge that has contributed to Cell~$C$ in Matrix~$M_0$. Thus, the aggregate value in Cell~$C$ is replaced by the value associated with Edge~$E$ (e.g., may be set to 1 if the sketch is counting the frequency of receiving Edge~$E$), and the rank of Cell~$C$ is set to $RV_{E}[0]$, which is zero in this example.

\noindent
{\bf $\bullet$~Update the Aggregate:}
If $RV_{E}[i]$~=~$R(C)$, then update the aggregate of Cell~$C$, and leave the rank of Cell~$C$ unchanged. This preserves the aggregation of the previous instances of this edge or other edges of the same rank that collide with Edge~$E$ in Cell~$C$. For example, in Figure~\ref{Fig:GeneralRankingLogic}, Edge~$E$ has the same rank in Matrix~$M_2$ as the rank of the last edge that has contributed to Cell~$C$. Thus, the aggregate value in Cell~$C$ is updated by the value associated with Edge~$E$, e.g., may be incremented by one if the sketch is counting the frequency of receiving Edge~$E$.

\noindent
{\bf $\bullet$~Do Nothing:}
If $RV_{E}[i]$~\textless~$R(C)$, then do nothing to Cell~$C$. This means that the last edge, say $E_{last}$, that has contributed to Cell~$C$ has 
a higher rank that prevents Edge~$E$ from evicting Edge~$E_{last}$ or even contributing to Edge~$E_{last}$'s aggregate value. For example, in Figure~\ref{Fig:GeneralRankingLogic}, Edge~$E$ has a lower rank in Matrix~$M_4$ than 
that of the last edge that has contributed to Cell~$C$. Thus, 
the value in
Cell~$C$ is kept unaffected.

\ourSketch{} guarantees that all edges of Label~$X$ have the highest priority in the matrix corresponding to Label~$X$, i.e., Matrix~$M_X$. This guarantees that all the edges with Label~$X$ have the highest rank of zero in their ``home" matrix $M_X$. Hence, an edge of Label~$Y$, where $Y~\neq~X$, can possibly use a cell, say $C_\text{rented}$, in Matrix $M_X$, if $C_\text{rented}$ has never been used by an edge of Label~$X$. Moreover, edges with Label~$X$ are given the privilege to evict lower-ranked edges in Cell~$C_\text{rented}$, use the cell, and disallow any edge not labeled by Label~$X$ to use Cell~$C_\text{rented}$. This is achieved by updating the rank of Cell~$C_\text{rented}$ with zero, the highest rank that cannot be evicted.    

Notice that any query, say~$Q$, processing Edge~$E$, should consult only the cells that hold the ranks of Edge~$E$. For example, Query~$Q$, regardless of its type, when retrieving Edge~$E$ from \ourSketch{}, will consult only matrices $M_0$, $M_1$, and $M_2$ in Figure~\ref{Fig:GeneralRankingLogic}. The reason is that the values at these cells may represent contributions by Edge~$E$. However, Matrices~$M_3$ and~$M_4$ should not be considered when querying Edge~$E$ as their ranks guarantee that Edge~$E$ has not contributed to their current aggregate values, otherwise, they would hold the ranks corresponding to the ranks of Edge~$E$.
Notice that \ourSketch{} does not allow an edge to use more than one cell per matrix. Using more than one cell per matrix 
would
increase the processing time as well as the collision rate, which may decrease the approximation accuracy. However, using one cell per matrix gives each edge a chance to use a cell that might be unoccupied in each matrix.

\begin{figure}[t]
\centering
\includegraphics[width=3.3in]
{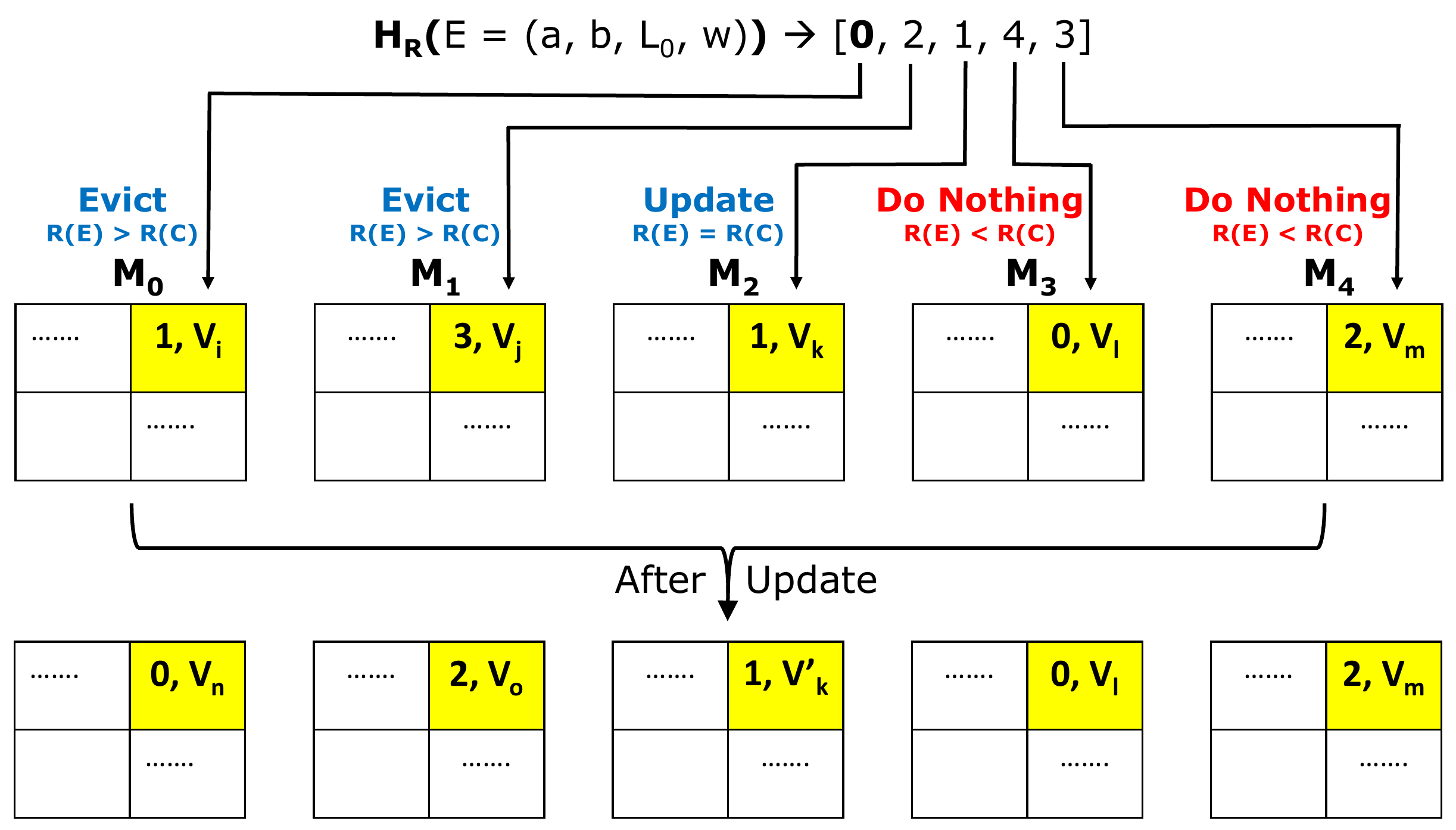}
\caption{An edge is allowed to use cells in matrices of other labels 
if the edge rank is higher than or equal to the cells' ranks.}
\label{Fig:GeneralRankingLogic}
\end{figure}

\subsection{Generation and Mapping of Rank Vectors}
\label{sec:RankGeneration}
In this section, we describe how \ourSketch{} generates the rank vectors as well as how a rank vector is assigned to an edge. Recall that a rank vector, for a given Graph $G$, is a permutation of the integer values $\{0, 1, ..., |L|-1\}$, where $|L|$ is the number of distinct edge-labels of Graph~$G$. Also, recall that zero is the highest rank. \ourSketch{} accepts an initialization parameter, namely $P_{ranks}$, that corresponds to the number of distinct random rank-vectors that \ourSketch{} generates and uses. \ourSketch{} restricts that the number of rank vectors is upper-bounded by the factorial of $|L|-1$, i.e., $P_{ranks}~\leq~(L-1)!$. This restriction makes it possible to generate $P_{ranks}$ rank vectors that are all unique.

For illustration, refer to an example for generating rank-vectors in Figure~\ref{Fig:RankVectorGeneration}. 
In the figure, we assume that \ourSketch{} is initialized for Graph~$G_R$ that has five distinct edge-labels (i.e., $|L|$~=~5), and that the number of the rank vectors to generate is four (i.e., $P_{rank}$~=~4). Using these parameters, \ourSketch{} generates and materializes four random rank-vectors that are different permutations of the values $\{1, 2, ..., |L|-1\}$. Notice that zero is not considered in the materialized rank-vectors as in Figure~\ref{Fig:RankVectorGeneration}. Rank zero is injected on the fly into a rank vector when that vector is selected for an incoming edge, where the injection position is controlled by the edge's label.

To illustrate how a rank vector including zero is assigned to a streamed edge, assume that \ourSketch{} receives Edge~$E~=~(a,~b,~L_0)$ as 
in Figure~\ref{Fig:RankVectorGeneration}. \ourSketch{} uses a hash function, namely $H_R$, that hashes Edge~$E$ into a value in the integral range $[0,~P_{ranks}~-~1]$. In the example in Figure~\ref{Fig:RankVectorGeneration}, Function~$H_R$ accepts the source vertex, the destination vertex, and the label of Edge~$E$ as inputs to hash Edge~$E$ into either 0, 1, 2, or 3. In this example, Edge~$E$ is assigned $RV[1]$ as its rank vector. However, $RV[1]$ does not include Rank zero that defines the matrix where Edge~$E$ has the highest rank. \ourSketch{} uses the label of Edge~$E$ to augment the selected rank-vector with the zero rank-value. This augmentation assures that Edge~$E$ has the highest rank in the matrix corresponding to the label of Edge~$E$. For example, as the label of Edge~$E$ is $L_0$, \ourSketch{} injects zero into the first element in the generated rank-vector, i.e., the assigned rank-vector becomes $[0, 2, 1, 4, 3]$. Notice that if Edge~$E$ had another label, say $L_1$, then Rank zero would be injected in the location corresponding to Matrix~$M_1$.

\begin{figure}[t]
\centering
\includegraphics[width=2.7in]
{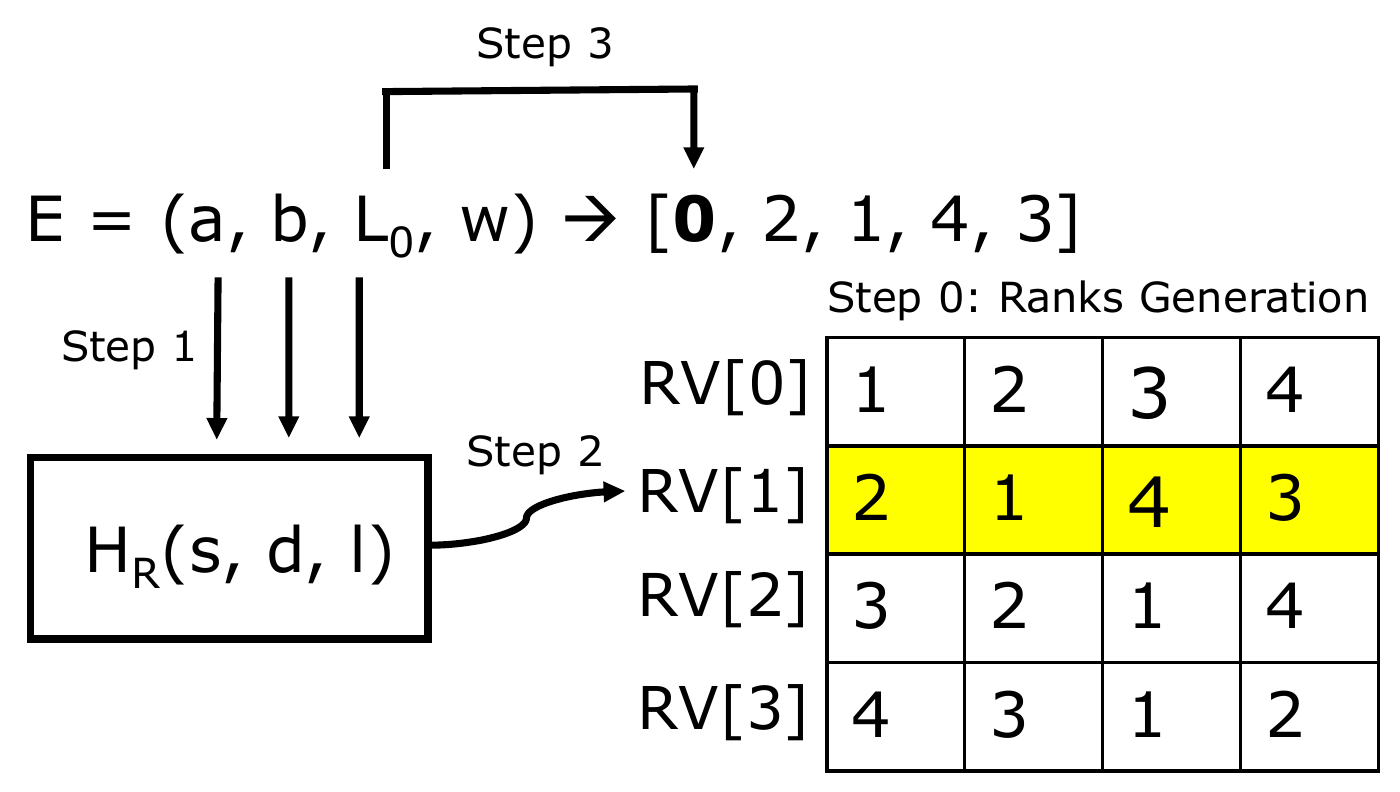}
\caption{An edge is assigned the highest rank, i.e., zero, according to the edge's label in the corresponding sketch matrix.}
\label{Fig:RankVectorGeneration}
\end{figure}

\section{Query Estimation}
\label{sec:queryEstimation}

In this section, we describe how \ourSketch{} estimates the results of various query types. In particular, Section~\ref{sec:freqQueries} elaborates on frequency-based queries (e.g., the constrained edge-frequency query), and Section~\ref{sec:traversalQueries} describes how \ourSketch{} estimates graph-traversal queries (e.g., the constrained reachability query). For each query type, we demonstrate how the sketch is updated when receiving a streamed edge as well as how the sketch is queried to evaluate the query approximately.

\subsection{Frequency-Based Queries}
\label{sec:freqQueries}
For frequency-based queries, we assume that a streamed edge is associated with an attribute, say $weight$, of a numerical type that can be aggregated. Without loss of generality, we term this query type a {\em frequency-based} query. Applications usually use this query type to estimate the occurrence frequency of a given edge or sub-graph in a stream. 

\subsubsection{Edge Queries}
\label{sec:edgeQueries}
Given two vertex identifiers $a~\in~V$ and $b~\in~V$ and an edge label $l~\in~L$, 
let 
$f_{e}(a,~b,~l)$ 
be
the exact aggregated edge-weight from Vertex~$a$ to Vertex~$b$, where the edge is labeled by Label~$l$. 
Furthermore, let $\hat{f}_{e}(a,~b,~l)$ be 
the estimated weight of this edge.

Query~$\hat{f}_{e}(a,~b,~l)$ represents an edge-query. For instance, in a social network, one may estimate the number of comments from User~$A$ on a post by User~$B$, where a comment is represented as a directed edge from User~$A$ to User~$B$ with an edge-label 
``comment" (other interactions could be represented by other edge labels).

\noindent
{\bf Insertion of Edges:} Algorithm~\ref{Algo:UpdateEdgeQuerySketch} depicts how \ourSketch{} is maintained when inserting an edge to estimate later edge queries. Refer to Figure~\ref{Fig:EdgeCountStreaming} for illustration. Assume that an instance of \ourSketch{} is built for processing a graph of five labels, and is receiving Edge~$E = (a,~b,~L_0,~1)$ with Rank-vector $RV(E)~=~[0,~2,~1,~4,~3]$. Assume further that the sketch is built to perform a sum aggregation on the weight attribute that is set to one for Edge~$E$. Figure~\ref{Fig:EdgeCountBefore} gives \ourSketch{} just before receiving Edge~$E$, where each cell, say $C$, holds an aggregate corresponding to the weights of some aggregated edge weights, and the rank of the last edge that has contributed to Cell~$C$. For instance, in Figure~\ref{Fig:EdgeCountBefore}, the highlighted cell in Matrix~$M_2$ illustrates that the aggregated sum in the cell is 4, and that the last edge with the highest rank that has contributed to this cell has Rank~1 for Matrix~$M_2$. To update \ourSketch{} with Edge~$E$, the corresponding rank vector of Edge~$E$ is computed as illustrated in Section~\ref{sec:RankGeneration}.
Then, the cells that are potential candidates for 
use by Edge~$E$ 
are selected. In particular, the cell corresponding to $(H_v(a),~H_v(b))$ in each matrix is a potential candidate (Figure~\ref{Fig:EdgeCountBefore} highlights these cells in yellow).

As an optimization, the cells corresponding to $(H_v(a),~H_v(b))$ in each matrix are physically stored in contiguous memory, thus exhibiting high locality of memory access (i.e., the matrices 
given
in Figure~\ref{Fig:EdgeCountBefore} are a logical representation of a single larger physical-matrix).
As explained in Section~\ref{sec:RankingLogic}, according to the rank values in the potential candidate cells and the ranking vector of Edge~$E$, only a subset of these
potential cells may be updated by Edge~$E$ (we term these cells {\em candidate cells}). Figure~\ref{Fig:EdgeCountAfter} illustrates that Edge~$E$ evicts the value at Matrix~$M_0$ because Edge~$E$ has the highest rank in Matrix~$M_0$ (Lines$7-9$ in Algorithm~\ref{Algo:UpdateEdgeQuerySketch}). Eviction also happens in Matrix~$M_1$. However, in Matrix~$M_2$, the ranks are equal, so Edge~$E$ increments the aggregate value of the 
corresponding cell 
(Lines$10-11$ in Algorithm~\ref{Algo:UpdateEdgeQuerySketch}). For the last two matrices, the cells are occupied by edges with higher ranks, so Edge~$E$ is
prevented
from using these cells. 

Notice that we update the cell of Matrix~$M_2$ in Figure~\ref{Fig:EdgeCountAfter} for illustration purposes only. However, as an accuracy optimization, \ourSketch{} does not need to update that cell and will leave its value to be 4. The reason is that, in this case, \ourSketch{} can guarantee that Edge~$E$ has never been received before. Otherwise, the candidate cell in Matrix~$M_0$ of Figure~\ref{Fig:EdgeCountBefore} would have Rank zero if Edge~$E$ has been encountered before. Hence, the cell in Matrix~$M_2$ does not need to be incremented. 
The reason is that 
after updating the sketch with $E$'s arrival, the value 4 in Matrix~$M_2$ will implicitly count the arrival of $E$ without any 
changes to
Matrix~$M_2$. 
Thus, the value of the candidate cell in Matrix~$M_2$ of Figure~\ref{Fig:EdgeCountAfter} will be kept 
unchanged (i.e., with value 4)
to help 
increase
the estimation accuracy. \\
\underline{Complexity:} Updating the sketch with an incoming edge takes $O(|L|)$ time, where $|L|$ is the number of distinct labels.\\

\begin{algorithm}[t]
\floatname{algorithm}{\small Algorithm}
\caption{\small UpdateSketchFreqQuery ($sbgSketch,~E<a,~b,~l,~w>$)}
\label{Algo:UpdateEdgeQuerySketch}
\begin{algorithmic}[1]
{\small
\STATE $h(a) \gets H_{v}(a)$ //~hash Vertex~$a$
\STATE $h(b) \gets H_{v}(b)$ //~hash Vertex~$b$
\STATE $RV \gets getRankVector(a,~b,~l)$

\FOR {each label identifier $i \in sbgSketch.EdgeLabels$}
	\STATE //~get the cell in Matrix~$M_i$
	\STATE $Cell \gets sbgSketch.getMatrix(i)[h(a)][h(b)]$ 
    \IF {RV[i] is higher than Cell.Rank}
    	\STATE $Cell.Value \gets w$ //~Evict
		\STATE $Cell.Rank \gets RV[i]$ //~Occupy
    \ELSIF {RV[i]~=~Cell.Rank}    
		\STATE $Cell.Value \gets Cell.Value~+~w$ //~Contribute
	\ENDIF  
\ENDFOR
}
\end{algorithmic}
\end{algorithm}

\noindent
{\bf Edge-Query Estimation:}~Algorithm~\ref{Algo:EstimateEdgeQuery} depicts how \ourSketch{} estimates
the answer to an edge-frequency query.
Figure~\ref{Fig:EdgeCountEval} illustrates how Query~$\hat{f}_{e}(a,~b,~L_0)$ is evaluated. First, the endpoint vertexes are hashed to determine the candidate cells to check at each matrix. Then, only the candidate cells with ranks equal to 
those 
of the queried edge (i.e., $(a,~b,~L_0)$), are considered by computing the minimum values of these cells (Lines~$8-9$ of Algorithm~\ref{Algo:EstimateEdgeQuery}).
This guarantees that the estimate might be an overestimate of the actual answer, but can never be an underestimate (as each edge is guaranteed to have the highest rank in one matrix). 
If multiple sketches are used, then the minimum value of the results from all the sketches will form the final answer.
Notice that if anyone of the candidate cells has a rank
higher than the corresponding rank of the edge query, then \ourSketch{} returns zero, indicating with certainty that the edge has never been encountered
before (see Theorem~\ref{Theory:zeroEstimation}).
For the sake of completeness, we provide a theoretical error-estimate of \ourSketch's error distribution. \\
\underline{Complexity:} Approximating the aggregate weight of an edge takes $O(|L|)$ time, where $|L|$ is the number of distinct labels.

\begin{theorem}[Informal]
\label{t:boundEdge}
Let $L$ be the number of priorities and $X_e$ be the number of arrivals of edge $e \in V\times V$ during an observation window, where $V$ is the set of vertexes in the graph stream. 
Let $X_e^\text{(\ourSketch)} \geq X_e$ be the upper bound on $X_e$ given by \ourSketch{} and let $X_e^\text{(TCM)} \geq X_e$ be the absolute-error distribution given by TCM with the same number of sketch counter cells. 
Let $P \geq 1$ be the number of $P$-independent hash functions used in \ourSketch{} and TCM.
Then, the distribution of the absolute error is
\begin{align*}
&\text{Pr}[X_e^\text{(\ourSketch)} - X_e > k]  < (\text{Pr}[X_e^\text{(TCM)} - X_e > k ] - \zeta_{k,L,p})^P, 
\end{align*}
where $0 < \zeta_{k,L,P} < \text{Pr}[X_e^\text{(TCM)} - X_e > k ]$ is a lower bound on the probability that $X_e$ is hashed into one of the lower-priority sketch-matrices and survives eviction ($\zeta_{k,L,P}$ is given in Appendix~\ref{sec:Proof_BoundEdge}).
\end{theorem}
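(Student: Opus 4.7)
My plan is to analyze a single sketch instance, obtain a per-sketch tail strictly better than that of TCM by an additive gap $\zeta_{k,L,P}$, and then tensorize over the $P$ independent sketches.

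\textbf{Single-sketch analysis.} Fix one sketch with vertex hash $H_v$ and let $C_i$ denote the cell $(H_v(a),H_v(b))$ in matrix $M_i$. By the rank-injection rule of Section~\ref{sec:RankGeneration}, every label-$l(e)$ edge (including $e$) carries rank $0$ (highest) in the home matrix $M_{l(e)}$, while every edge of a different label carries a rank in $\{1,\dots,L-1\}$ there. I would first prove the invariant that after processing the full stream the cell $C_{l(e)}$ has rank $0$ and aggregate value $X_e + N_e^{\text{home}}$, where
$N_e^{\text{home}}:=\sum_{e'\ne e,\; l(e')=l(e),\; H_v(e')=H_v(e)} X_{e'}$
counts only same-label collisions. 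The reason is that the first label-$l(e)$ arrival evicts any prior tenant of $C_{l(e)}$ and locks the cell at the maximum priority $0$; thereafter only rank-$0$ (i.e., label-$l(e)$) arrivals touch the cell, and they do so exclusively through the ``update'' branch. Since $0$ is the highest priority, $C_{l(e)}$ always passes the rank-match test, so the single-sketch estimator satisfies
\[
X_e \;\leq\; X_e^{(\text{SBG})} \;\leq\; X_e + N_e^{\text{home}}.
\]

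\textbf{Coupling to TCM and extracting the gap.} An analogous argument for TCM on the same number of counter cells gives the single-sketch estimate $X_e + N_e^{\text{all}}$, where $N_e^{\text{all}}$ sums over all colliding arrivals regardless of label. Coupling the hash families so that the same underlying arrivals populate the same bucket, we obtain the deterministic domination $N_e^{\text{home}} \leq N_e^{\text{all}}$, and hence the non-strict bound $\Pr[X_e^{(\text{SBG})} - X_e > k] \leq \Pr[X_e^{(\text{TCM})} - X_e > k]$. The strict improvement is produced by the minimization over the $L-1$ foreign matrices: whenever the random rank vector places $e$ at a rank $r>0$ in some $M_i$ such that (i) no higher-priority arrival hashes to $C_i$ (\emph{survival}) and (ii) the aggregate of the other rank-$r$ contributors at $C_i$ falls below $N_e^{\text{home}}-k$ (\emph{beating the home bound}), the min-rule forces $X_e^{(\text{SBG})} \leq X_e + k$ and averts the overshoot event. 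I would derive the closed-form lower bound $\zeta_{k,L,P}$ on the probability of this ``survive-and-beat'' event by combining the uniform law over the $(L-1)!$ rank permutations with a union bound over the other arrivals that could contend for $C_i$.

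\textbf{Tensorization and main obstacle.} The $P$ sketches use mutually independent vertex hashes and rank assignments, and the reported estimate is the minimum of the $P$ per-sketch estimates; therefore the overshoot event $\{X_e^{(\text{SBG})}-X_e>k\}$ is the intersection of $P$ independent per-sketch overshoot events, and the per-sketch bound raises to the $P$-th power, giving the stated inequality. The main obstacle is producing an explicit, strictly positive, $k$-uniform lower bound $\zeta_{k,L,P}$ below the TCM tail: the ``survive-and-beat'' event couples the random permutation governing the foreign ranks with the random collision mass at the foreign cells, and these dependencies must be broken by a careful inclusion--exclusion or coupling argument before the comparison with TCM can be made quantitative. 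This is where I expect the bulk of the algebraic work to live, and it is precisely what is deferred to Appendix~\ref{sec:Proof_BoundEdge}.
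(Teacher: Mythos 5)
Your overall architecture matches the paper's: the home matrix reproduces the TCM tail, the foreign matrices contribute an additive rescue probability $\zeta_{k,L,P}$, and the $P$ independent sketches raise the per-sketch bound to the $P$-th power. However, two steps as written would derail the quantitative comparison. First, the TCM baseline in this theorem is the per-label-matrix variant (one matrix per label, same total number of counter cells), so TCM's overshoot is \emph{also} only the same-label collision mass $N_e^{\text{home}}$ (the paper's $\tilde{\lambda}_0$ is a sum over edges sharing $e$'s label). Your claimed domination $N_e^{\text{home}} \leq N_e^{\text{all}}$ compares against an unlabeled single-matrix TCM that is not the object in the theorem; under the correct baseline the home-matrix estimate and the TCM estimate coincide (up to the $(1+\alpha)$ cell-count adjustment), and the \emph{entire} gap --- not just the strict part --- must come from the foreign-matrix rescue event. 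Second, your condition (ii) for the rescue event is wrong: to force $X_e^{(\text{SBG})} \leq X_e + k$ via the min-rule you need the same-rank collision mass at the foreign cell $C_i$ to be at most $k$, not to fall below $N_e^{\text{home}} - k$ (e.g., with $N_e^{\text{home}}=100$ and $k=5$, a foreign collision mass of $50$ satisfies your condition but does not avert the overshoot). The paper's $\gamma_i$ in Equation~\eqref{e:2} is precisely $\Pr[\text{at most } k \text{ same-rank colliders at } M_i]$, combined with the eviction-survival factor of Equation~\eqref{e:1}; with your condition the derived $\zeta$ would lower-bound the probability of the wrong event.

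Beyond these fixes, note that the paper obtains the closed form for $\zeta_{k,L,P}$ by assuming a (nonhomogeneous) Poisson arrival process, which makes the number of colliders at each cell Poisson and the rank-splitting multinomial; your deferred ``inclusion--exclusion or coupling argument'' will need an explicit arrival model of this kind to produce anything in closed form. Also, in the tensorization step the paper accounts for the fact that splitting the same memory across $P$ sketches multiplies the per-counter arrival rate by $P$; your write-up keeps the per-sketch bound fixed while raising it to the $P$-th power, which silently assumes the per-sketch cell count does not shrink with $P$.
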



\begin{figure}[h!!!]
\centering
\includegraphics[height=1.5in,width=3in]{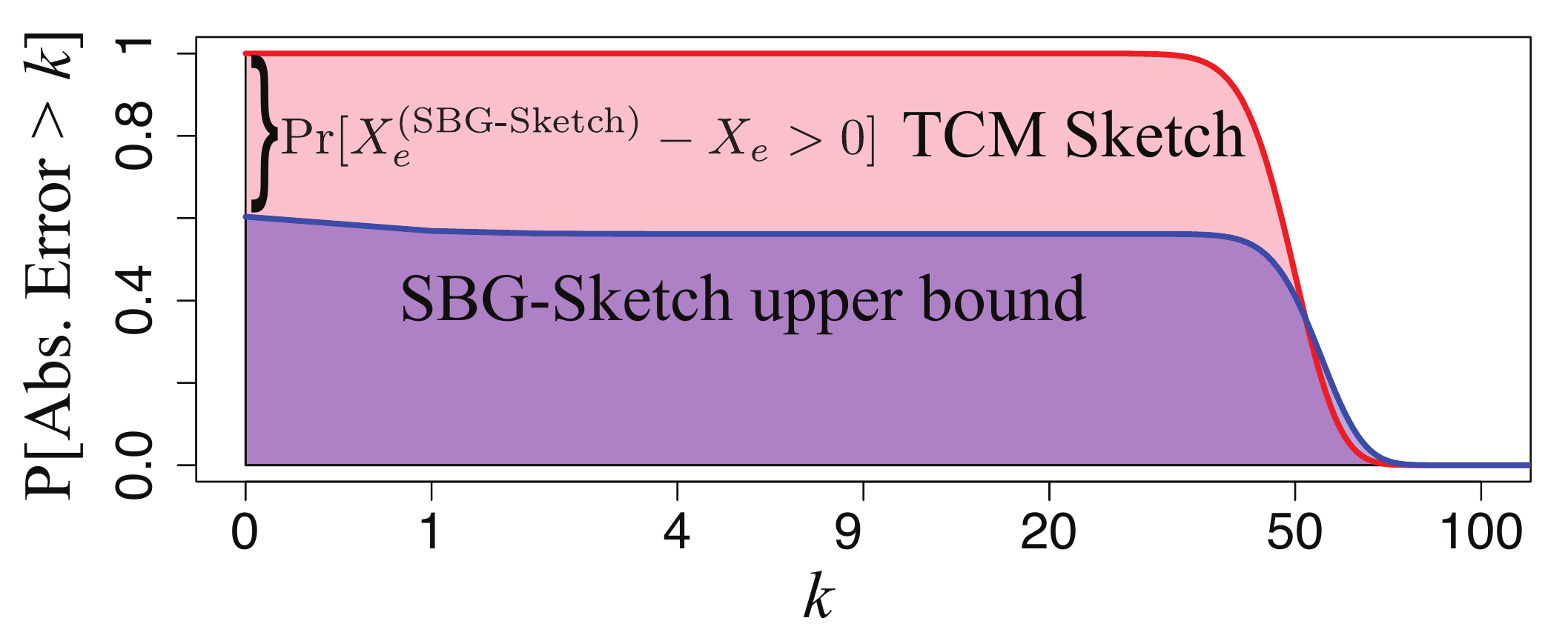}
\caption{SBG-Sketch $\text{Pr}[X_e^\text{(\ourSketch)} - X_e > k]$ upper bound against TCM's exact value $\text{Pr}[X_e^\text{(TCM)} - X_e > k ]$ for highly unbalanced edge arrival rates between different edge labels.}\label{f:boundexample}
\end{figure}

A formal version of Theorem~\ref{t:boundEdge} and its proof are left to Appendix~\ref{sec:Proof_BoundEdge}.
Theorem~\ref{t:boundEdge} shows that, for the same number of sketch cells, the absolute error of \ourSketch{} is smaller than that of TCM.
For values of $L \geq 3$, the value of $\zeta_{k,L,P}$ tends to increase quickly with $k$ until it reaches the probability that the counter related to a given label is evicted in the sketch matrices of other labels.
As an illustrative example, we use the equations in Theorem~\ref{t:boundEdge} to plot the curves in Figure~\ref{f:boundexample}. Figure~\ref{f:boundexample} 
gives 
the complementary cumulative distribution of the absolute error of edges of the most frequent label, say Label $A$, out of 100 distinct labels in \ourSketch{} against that of TCM when taking into account a 10\% decrease in the sketch matrix size due to the ranking data structure. We set the edge arrival-rate of Label~$A$ to be such that there is an average of $50$ edge collisions per sketch counter, which we define as $100\times$ that of other $99$ labels; we consider only one hash function for simplicity ($P=1$). 
Note that \ourSketch{} gets absolute error of {\bf zero} with probability $1 - \text{Pr}[X_e^\text{(\ourSketch)} - X_e > 0] \approx 0.4$ while TCM with the same probability gets absolute error around 40. This happens because if \ourSketch{} is not able to have the Label~$A$ counter evicted from the sketches of other labels, it will very likely contain the correct number of edge arrivals of edges of Label~$A$.

\begin{theorem}
\label{Theory:zeroEstimation}
{\em Using \ourSketch{}, ~$\hat{f}_{e}(a,~b,~L_i)~=~0$ $\implies$ $f_{e}(a,~b,~L_i)~=~0$.}
\end{theorem}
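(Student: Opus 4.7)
The plan is to prove the contrapositive: assume the edge $E = (a, b, L_i)$ has been streamed at least once, so $f_{e}(a,b,L_i) > 0$, and conclude that the estimator of Algorithm~\ref{Algo:EstimateEdgeQuery} must return a strictly positive value.

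The core of the argument is what I would call the \emph{home-matrix invariant} for $E$. Because $E$'s label is $L_i$, the rank-vector construction of Section~\ref{sec:RankGeneration} injects the top-priority value $0$ at position $i$, so $RV_E[i] = 0$. Tracking the cell $C_i := M_i[H_v(a)][H_v(b)]$ through the stream, the very first arrival of $E$ triggers one of only two update cases from Section~\ref{sec:RankingLogic}: either $C_i$ already has rank $0$ (from a prior label-$L_i$ edge that collided in $M_i$) and the Update-the-Aggregate rule adds $E$'s weight, or $C_i$ has a strictly lower-priority rank and the Evict-and-Occupy rule overwrites it with $E$'s weight while promoting $C_i$'s rank to $0$. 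Either way, immediately after this arrival $C_i$ carries rank $0$ and an aggregate containing $E$'s contribution. Since no priority beats $0$, $C_i$ can never be evicted afterward; only further label-$L_i$ edges colliding at $(H_v(a), H_v(b))$ can touch it, and they can only add non-negative weight via Update-the-Aggregate. Hence at query time $C_i$ still has rank $0$ and aggregate value at least $f_e(a,b,L_i) > 0$.

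Next I would rule out the estimator short-circuit. The estimator inspects each candidate cell $M_j[H_v(a)][H_v(b)]$, takes the minimum among those whose stored rank matches $RV_E[j]$, and short-circuits to $0$ as soon as any candidate's rank is strictly weaker in priority than $RV_E[j]$ (i.e., its integer value is strictly greater than $RV_E[j]$). The supporting sub-lemma I would state and prove is a \emph{priority-monotonicity} property: the stored rank of any cell can only gain priority over time, because an arriving edge can rewrite a cell only when its own priority is at least as high as the cell's current rank. It follows that from $E$'s first arrival onward, the cell $M_j[H_v(a)][H_v(b)]$ carries a rank whose priority is at least that of $RV_E[j]$ for every $j$, so the short-circuit condition cannot fire. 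Combined with the home-matrix invariant, $C_i$ is a matching candidate and contributes a strictly positive value to the minimum, so $\hat{f}_e(a,b,L_i) > 0$, contradicting the hypothesis.

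The step that deserves the most care is the priority-monotonicity sub-lemma and its alignment with the three-case update rule (Evict-and-Occupy, Update-the-Aggregate, Do-Nothing), since this is precisely what links ``$E$ has been streamed at least once'' to ``no cell at $(H_v(a),H_v(b))$ can end up with a rank strictly weaker than what $E$ would have installed there''. One must also check, as part of the same monotonicity argument, that interleaved arrivals of edges with other labels cannot violate the invariant at $C_i$; this reduces to the observation that such edges have strictly positive rank at position $i$ and therefore cannot dislodge a cell sitting at rank $0$. Once that short sub-lemma is in place, the home-matrix invariant delivers the conclusion essentially without further work.
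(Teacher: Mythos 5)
Your proof is correct and follows essentially the same route as the paper's: the paper argues by contradiction that once $E$ has been inserted, every candidate cell $M_j[H_v(a)][H_v(b)]$ carries a rank at least as high in priority as $RV_E[j]$, so the estimator's ``rank higher than cell'' short-circuit can never fire for a previously seen edge; your contrapositive formulation is the same argument repackaged. Your added home-matrix invariant and explicit priority-monotonicity sub-lemma merely spell out details the paper leaves implicit (in particular, that the rank-$0$ cell in $M_i$ guarantees a positive matching candidate so the estimate cannot default to zero), which is a welcome tightening but not a different approach.
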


\noindent
{\bf Proof}.~Refer to Appendix~\ref{sec:Proof_ZeroEstimate}.

\begin{algorithm}[t]
\floatname{algorithm}{\small  Algorithm}
\caption{\small EstimateEdgeQuery ($sbgSketch,~E<a,~b,~l>$)}
\label{Algo:EstimateEdgeQuery}
\begin{algorithmic}[1]
{\small
\STATE $freqEstimate \gets \infty$ //~for the get-min logic
\STATE $h(a) \gets H_{v}(a)$ //~hash Vertex~$a$
\STATE $h(b) \gets H_{v}(b)$ //~hash Vertex~$b$
\STATE $RV \gets getRankVector(a,~b,~l)$

\FOR {each label identifier $i \in sbgSketch.EdgeLabels$}
	\STATE $Cell \gets sbgSketch.getMatrix(i)[h(a)][h(b)]$ 
    \STATE //~only check cells the edge may have contributed to their values
    \IF {RV[i]~=~Cell.Rank}     
		\STATE $freqEstimate \gets min(freqEstimate,~Cell.Value)$
    \ELSIF {RV[i] is higher than Cell.Rank}  
    	\STATE //~this edge was never seen before
		\STATE $freqEstimate \gets 0$ 
        \STATE $break$ 
	\ENDIF 
\ENDFOR
\IF {freqEstimate~=~$\infty$}     
		\STATE $freqEstimate \gets 0$
\ENDIF 
\RETURN freqEstimate
}
\end{algorithmic}
\end{algorithm}

\begin{figure}[t]
\centering
   \subfigure[Before Processing Edge~$E$]{\includegraphics[width=3.3in]{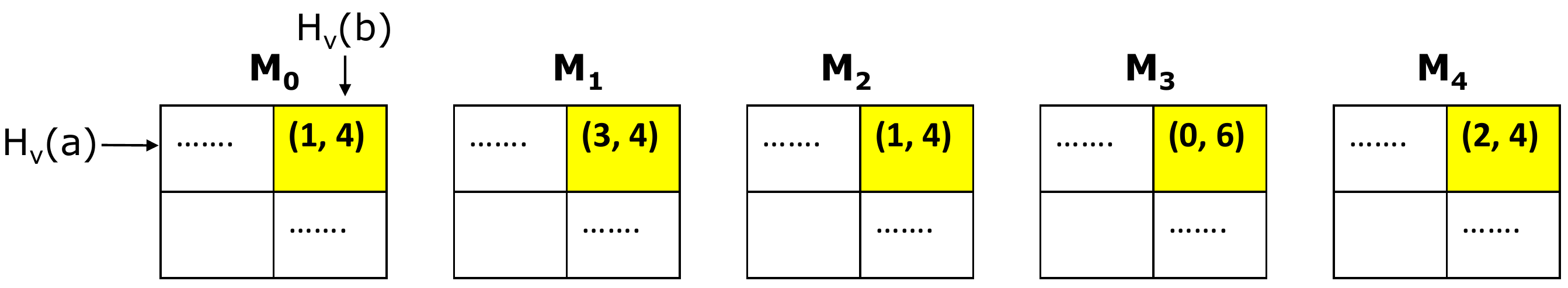}
	\label{Fig:EdgeCountBefore}
	}        
    \subfigure[After Processing Edge~$E$]{\includegraphics[width=3.3in]{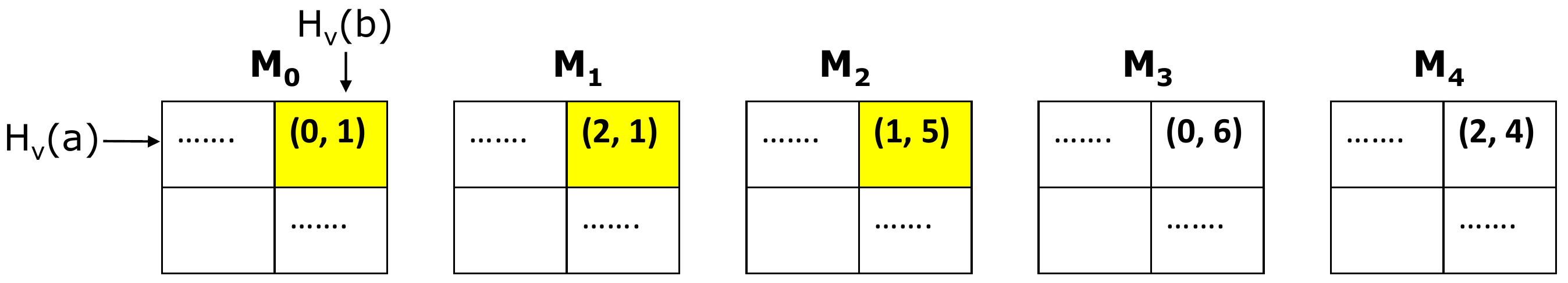}
	\label{Fig:EdgeCountAfter}
	}        
\caption{\ourSketch{} before and after streamed edge~$E$ = (a, b, $L_0$, 1) with rank vector of [0, 2, 1, 4, 3]}
\label{Fig:EdgeCountStreaming}
\end{figure}

\begin{figure}[t]
\centering
\includegraphics[width=3.3in]
{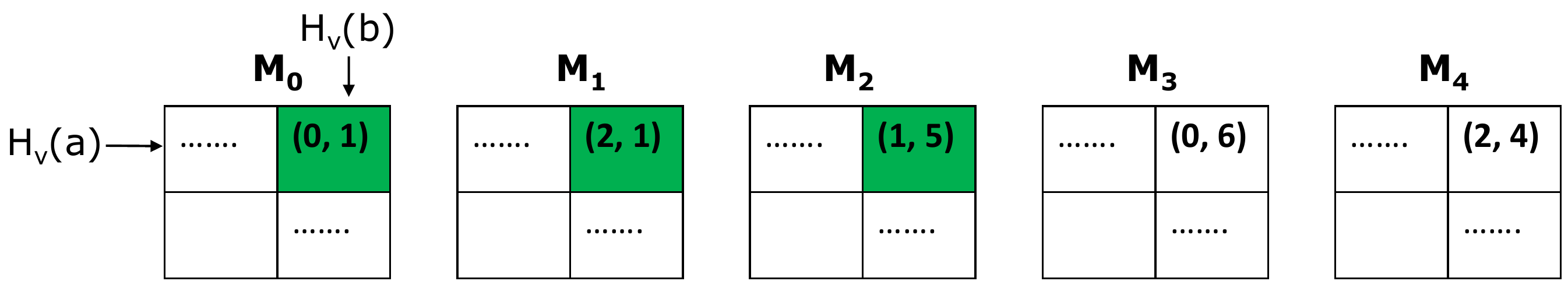}
\caption{Estimating the frequency of Edge~$E$ = (a, b, $L_0$) with rank vector [0, 2, 1, 4, 3].}
\label{Fig:EdgeCountEval}
\end{figure}

\subsubsection{Sub-Graph Queries}
\label{sec:subGraphQueries}
Aggregating edge-weights of a sub-graph is considered in both gSketch~\cite{gSketch_VLDB_2011} and TCM~\cite{TCM_SIGMOD_2016}. However, \ourSketch{} expands the semantics of sub-graph aggregate queries to allow restricting the sub-graph query by the edge labels. Given a sub-graph, say $g$, identified by a set of labeled edges, say $Q~=~\{(a_1,b_1,l_1),~\ldots,~(a_s,b_s,l_s))\}$, an exact sub-graph aggregation query $f_g(Q)$ returns the minimum of the weights of all the edges listed by~$Q$. We denote the estimate of a sub-graph aggregation query by $\hat{f}_g(Q)$, and we adopt the semantics that if the estimated frequency of any edge in~$Q$ is~$0$, we estimate $\hat{f}_g(Q)$ to be~$0$.
The reason is that 
the sub-graph identified by~$Q$ does not have an exact match in the stream.
Notice that inserting edges in a sketch that supports sub-graph queries follows the exact logic of Algorithm~\ref{Algo:UpdateEdgeQuerySketch}. Also, evaluating a sub-graph query depends on evaluating the edge-weight estimate of each edge forming the sub-graph (by directly referencing Algorithm~\ref{Algo:EstimateEdgeQuery}).

\underline{Complexity:} Approximating a sub-graph query, say $Q$, takes $O(|L| * |Q_E|)$ time, where $|L|$ is the number of distinct labels, and $|Q_E|$ is the number of edges of Query $Q$.\\

\subsection{Graph Traversal Queries}
\label{sec:traversalQueries}
Graph traversal queries on labeled graphs arise in many application domains. For example, reachability queries are used in communication-network troubleshooting, and random walking on edge-labeled graphs is a primitive operation in many machine learning techniques (e.g., ~\cite{RandomWalk_ML_2010,RandomWalk_CL_2011,Sun:2012}). In this section, we demonstrate how \ourSketch{} is useful in estimating constrained-reachability queries.

Given two vertexes, say $a$ and $b$, and a set of allowed labels, say $lSet$, a constrained reachability query $f_r(a,~b,~lSet)$ returns true if and only if there is a path, say $P$, from Vertex~$a$ to Vertex~$b$ such that each edge of Path~$P$ is labeled by any of the $lSet$'s labels. We denote the estimate of a reachability Query $f_r(a,~b,~lSet)$ as $\hat{f}_r(a,~b,~lSet)$.

Constrained reachability queries are important primitive operations in many application domains. For instance, in a protein-interaction network, where a vertex represents a protein and a labeled edge represents an interaction type between two proteins, a user may need to estimate if two proteins interact directly or indirectly through covalent or stable interaction types, i.e., evaluating $\hat{f}_r(Protien_a,~Protien_b,~\{Covalent, Stable\})$.
In machine-learning applications, one can use constrained-reachability queries as a way to construct feature vectors, indicating whether or not Vertex $A$ can reach Vertex $B$ using only edges of certain labels. These features can be used for link prediction tasks (similar to Sun et al.~\cite{Sun:2012}), among other applications where the learning algorithm can tolerate reachability approximation (i.e., false positives).

\noindent
{\bf Insertion of Edges:}~To support reachability queries, the same logic used to insert edges for edge queries could be applied (see Algorithm~\ref{Algo:UpdateEdgeQuerySketch}). However, it is sufficient to use edge weights of one to indicate edge existence between two vertexes, where an edge weight of zero in the adjacency matrices of the sketch flags that no edge exists.

\noindent
{\bf Reachability-Query Estimation:}~Any traversal-based reachability algorithm (e.g., DFS search) can be used to traverse the adjacency matrices of \ourSketch{} to evaluate a constrained-reachability query. However, the algorithm should check only the edges labeled by any of the labels allowed by the query. Notice that if multiple sketches are used, then each sketch evaluates the query independently. Then, the independent results are {\em logically anded} 
to form the final answer. The time-complexity of the evaluation is determined by the third-party algorithm used in traversing the summarized topology of the sketch.\\


\section{Experimental Evaluation}
\label{sec:ExperimentalEvaluation}

In this section, we experimentally evaluate the accuracy and the performance of \ourSketch{} against TCM~\cite{TCM_SIGMOD_2016}, the only state-of-the-art that is comparable to \ourSketch{} w.r.t. query expressiveness.
We measure the processing time and the estimation error using 
various 
types of queries on real datasets from different domains. For measuring the estimation errors, we use the {\em average relative-error} metric (ARE, for short). As defined in~\cite{gSketch_VLDB_2011,TCM_SIGMOD_2016}, given Query~$Q_i$, the relative error of $Q$, say $re(Q_i)$, is defined as:
\begin{align*}
re(Q_i)~=~\frac{\hat{f}(Q_i)~-~f(Q_i)}{f(Q_i)}
\end{align*}
where $f(Q_i)$ is the actual result of Query~$Q_i$, and $\hat{f}(Q_i)$ is its estimated value. The average relative-error is computed over a set of queries, say $Q = {Q_1, Q_2, \ldots, Q_n}$, as follows:
\begin{align*}
ARE(Q)~=~\frac{\sum_{j=1}^{n} re(Q_j)}{n}
\end{align*}

\subsection{Datasets and the Experimental Setup}
\label{sec:Datasets}
We use real datasets of labeled graphs from three different domains (communication network, social network, and biological network).
We use IPFlow~\cite{IPFlow}, Youtube~\cite{Youtube}, and String~\cite{String}. Table~\ref{table:Datasets} summarizes the properties of the aforementioned datasets, and gives the number of distinct labels of each dataset. To verify the label-skewness in real datasets, we found that for all the datasets in Table~\ref{table:Datasets}, $65\%-92\%$ of the edges are labeled by only $10\%-24\%$ of the labels (i.e., frequent labels). The IPFlow dataset is a collection of anonymized communication-traces from CAIDA’s equinix-Chicago monitor, where the edge labels represent the communication protocol used (e.g., HTTPS, SMTP, Telnet). The Youtube dataset is a subset of the popular video-sharing service, where the vertexes represent users, the edges represent user interactions, and the edges are labeled by user-interaction types as described in~\cite{Youtube}. The String dataset is a protein-interaction network, where the vertexes represent proteins, the edges represent interactions among the proteins, and the labels represent the protein-protein interaction types. Notice that the number of labels in all the real labeled datasets is much less than the number of the vertexes and the edges.

Both 
\ourSketch{} and TCM~\cite{TCM_SIGMOD_2016} are implemented as C++ libraries that can be used as components by any server. Our experiments are conducted on a machine running Windows~10 on~4 cores of Intel i7 3.40~GHz and 16~GB of main-memory. Notice that TCM is not designed to deal with labeled graphs, however, we follow the suggestion of the original paper~\cite{TCM_SIGMOD_2016} by creating a matrix for each label. Hence, \ourSketch becomes TCM if the ranking logic 
is removed. For fairness,
we use the same memory sizes for both \ourSketch{} and TCM.
Also note that if the rank logic data structures of \ourSketch{} do not reduce the number of sketch counters, the accuracy of \ourSketch{} is lower bounded by the accuracy of TCM. The reason is that an edge in \ourSketch{} is given the highest priority in the matrix corresponding to its label. Hence, it is guaranteed that an edge will experience the same hash-collision rate in the matrix corresponding to its label in both \ourSketch{} and TCM. For this reason, we focus on queries constrained with labels of high-frequency as they show the power of \ourSketch{} to leverage unused cells in the matrices corresponding to less-frequent labels.

\begin{table}
\begin{tabular}{|l||p{1.5cm}|p{1.5cm}|p{1.30cm}|} \hline
Dataset&\# Vertexes&\# Edges&\# Labels\\ \hline
IP Flow&237,022&22,497,005&39\\ \hline
Youtube&15,088&13,628,895&5\\ \hline
String Protein Network&1,520,673&348,473,440&45\\ \hline
\end{tabular}
\caption{Datasets}
\label{table:Datasets}
\end{table}

\subsection{Constrained Edge-Queries}
\label{sec:ExpEdgeQueries}

\subsubsection{Varying the Sketch Size}
\label{sec:ExpEdgeQueries_VarySize}
In this set of experiments, we study the accuracy of approximating constrained edge-queries using \ourSketch{} and TCM, the state-of-the-art. We fix the number of hash functions to two (i.e., we set $P~=~2$ in Figure~\ref{Fig:GeneralSketchStructure}), and we measure the estimation accuracy for 
various 
sketch sizes. A sketch size is determined using a sketch-size factor. A sketch-size factor, say $F$, is a value between $0$ and $1$ exclusive that defines the memory-size of the sketch w.r.t. the size of the original graph dataset. For instance, if the size of the original graph dataset is $1000$~MBs, and the sketch-size factor is $0.1$, then the total memory-size of a single sketch will be upper bounded by $1000~*~0.1~=100$~MBs. We consider this for each experiment so that both \ourSketch{} and TCM are assigned the same memory size for fair comparisons.

We generate 10,000 constrained edge-queries, say $Q_{10k}$, randomly for each dataset, say $G$, in Table~\ref{table:Datasets}. Then, we run Query-set $Q_{10k}$ on \ourSketch{} and TCM with different sketch-size factors, specifically, $0.05$, $0.1$, $0.15$, $0.2$, $0.25$, $0.3$, and $0.35$. We stream all the edges of each dataset in Table~\ref{table:Datasets} before running the query sets (e.g., for the String dataset, the sketch receives 348 million edges and then we run the 10,000 queries). Figure~\ref{Fig:Edge_Queries} gives the average relative-error when running $Q_{10k}$ as formerly described using each dataset of Table~\ref{table:Datasets}. As expected, the average relative-error decreases when increasing the sketch size for both \ourSketch{} and TCM. 
The reason is that the number of 
collisions decrease as the sketch size increases, and the average relative-error decreases accordingly. For all the datasets, the ARE 
of
\ourSketch{} is less than that of TCM. We attribute this to the rank-vectors and the ranking logic of \ourSketch{}. 
The reason is that 
the rank-vectors and the ranking logic are the main differences between \ourSketch{} and TCM (i.e., removing the ranking logic turns \ourSketch{} to TCM). Notice that for the same memory-size, the number of cells allocated to TCM is higher than that allocated to \ourSketch{} (a cell in \ourSketch{} uses an additional byte for the rank). Although the number of cells in TCM is higher, \ourSketch{} achieves better average relative-error as the ranking logic automatically handles label skew, and leverages the cells that may not be used by low-frequency labels. In contrast, an edge in TCM of Label~$L_i$ assigned to Matrix~$M_i$ can never use a cell in another matrix, say Matrix~$M_j$, even if $M_j$ is not fully-occupied by edges of Label~$L_j$.

Notice that the accuracy of \ourSketch relative to that of TCM increases as the graph size increases (which is used also to define the sketch size). To 
illustrate,
we measure the TCM error that \ourSketch{} reduces (e.g., a $90$\% error reduction means that the average relative-error of \ourSketch{} is only $10$\% of the error in TCM). Figure~\ref{Fig:Edge_ErrorReduction} gives the error reduction caused by \ourSketch{} comparing to that of TCM for all the datasets. \ourSketch{} significantly reduces the error of TCM, where the error reduction reaches $99$\% for the large String dataset, i.e., the estimation error of \ourSketch{} is just $1$\% of the TCM error on the same setup. Notice that the error reduction increases as the graph size increases. For example, the reduction reaches $88$\% for the Youtube dataset, where the size of the Youtube dataset is relatively smaller than that of the IPFlow dataset (whose the error reduction reaches $97$\%). We attribute this to the cell utilization effectiveness of the ranking module of \ourSketch{}. In contrast, TCM is vulnerable to wasting more cells if they are assigned to larger matrices of labels that are low-represented by graph edges.

The results in Figure~\ref{Fig:Edge_Queries} illustrate that the accuracy of \ourSketch{} is significantly and consistently better than that of TCM over real data from different domains. For example, consider the String protein-interaction network in Figure~\ref{Fig:Edge_String}. When setting the sketch size to $0.25$ of the String dataset size, the average relative-error of \ourSketch{} is $0.14$, which is significantly better than $11.92$, the average relative-error of TCM for the same setup (i.e., TCM overestimates the queries by $11.92x$ of the actual values on average, while the overestimation by \ourSketch is just $0.14x$). Moreover, the accuracy of \ourSketch{} increases when increasing the number of the pairwise-independent hash functions as demonstrated in Section~\ref{sec:ExpEdgeQueries_VaryHash}.

\begin{figure*}[ht]
\centering
    \subfigure[IPFlow dataset]{\includegraphics[width=2.24in]{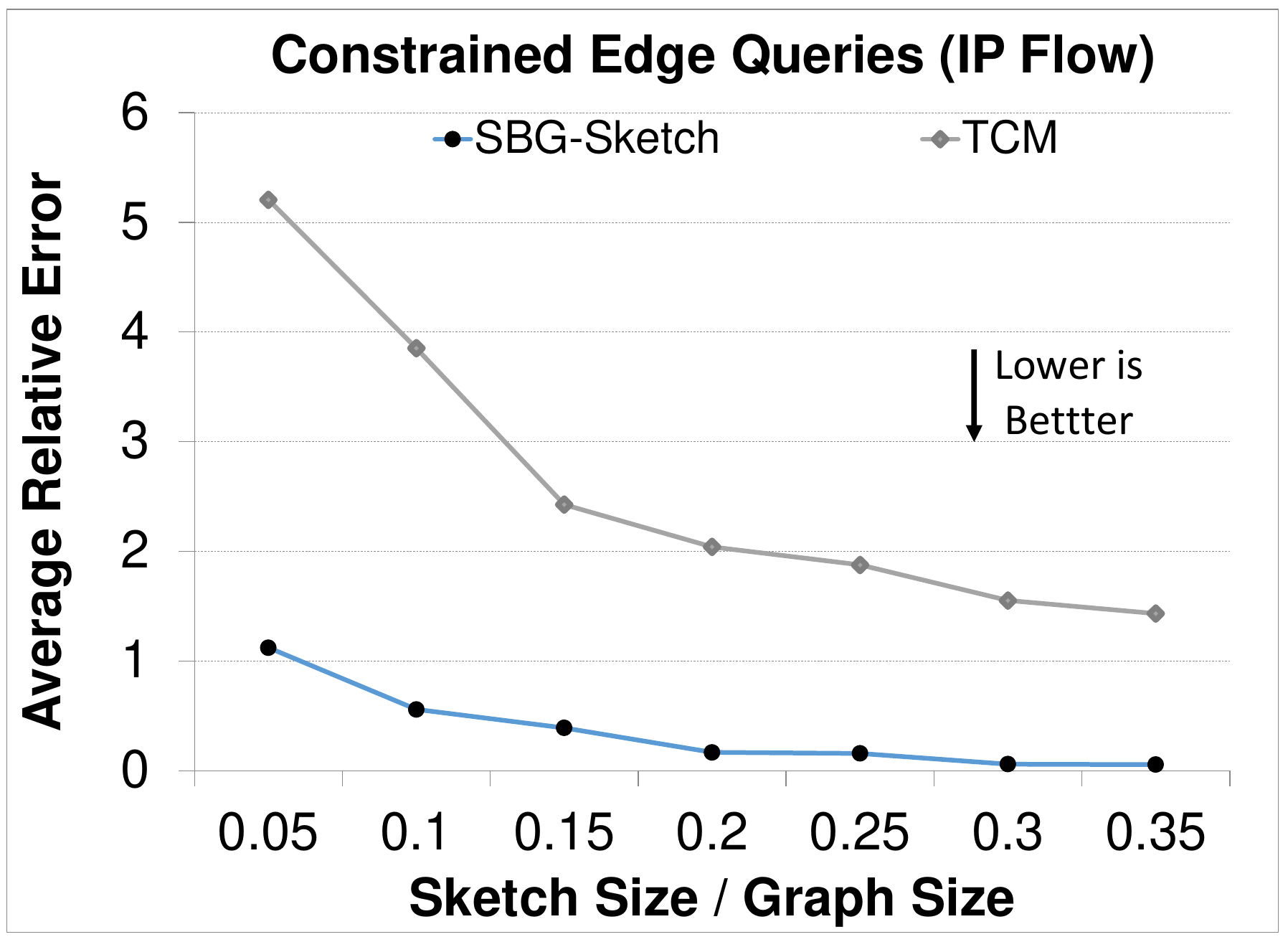}
	\label{Fig:Edge_IPFlow}
	}
   \hspace{0.01in}
\subfigure[Youtube dataset]
{\includegraphics[width=2.24in]{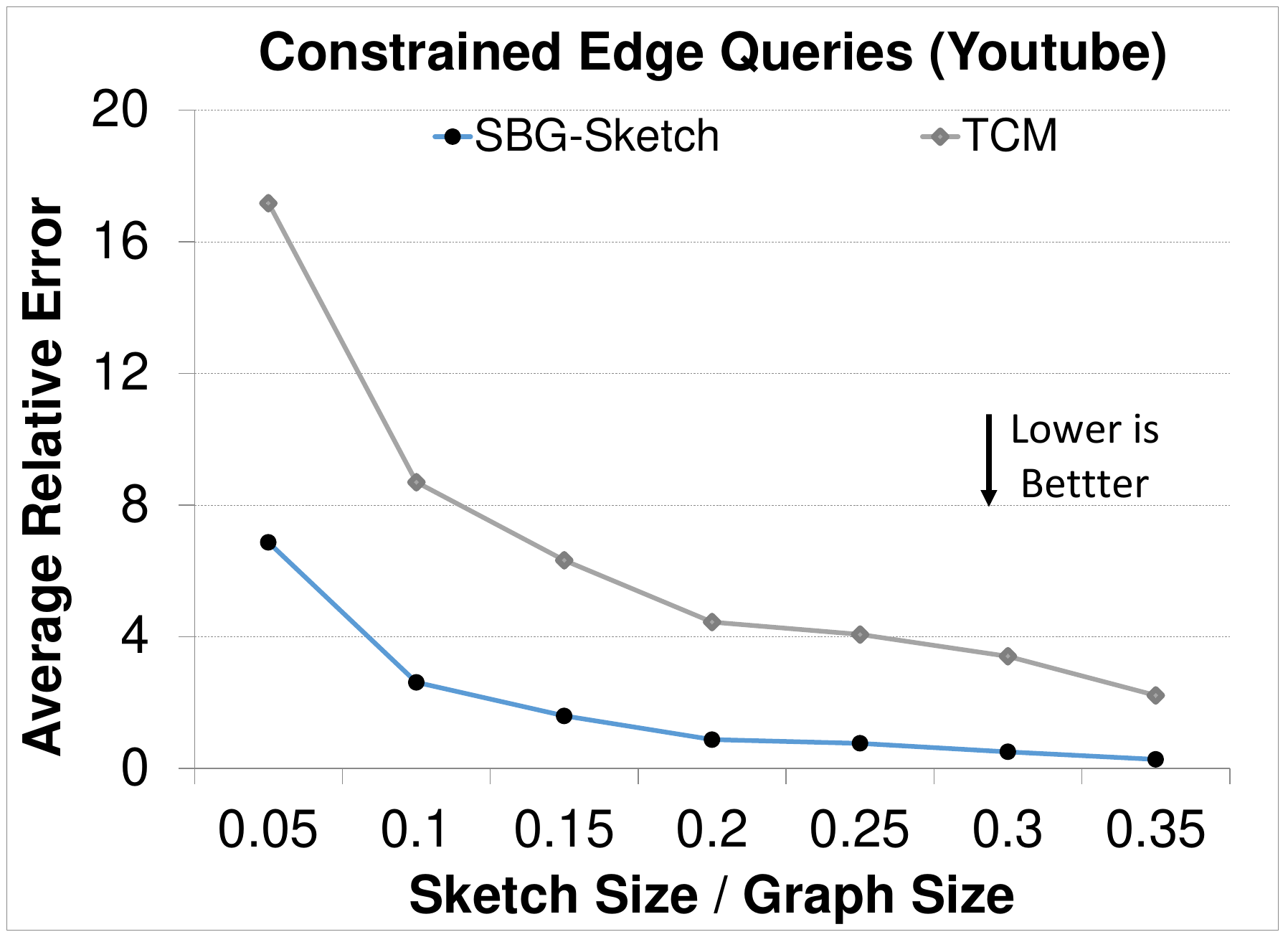}	\label{Fig:Edge_Youtube}}
\hspace{0.01in}
\subfigure[String dataset]
{\includegraphics[width=2.24in]{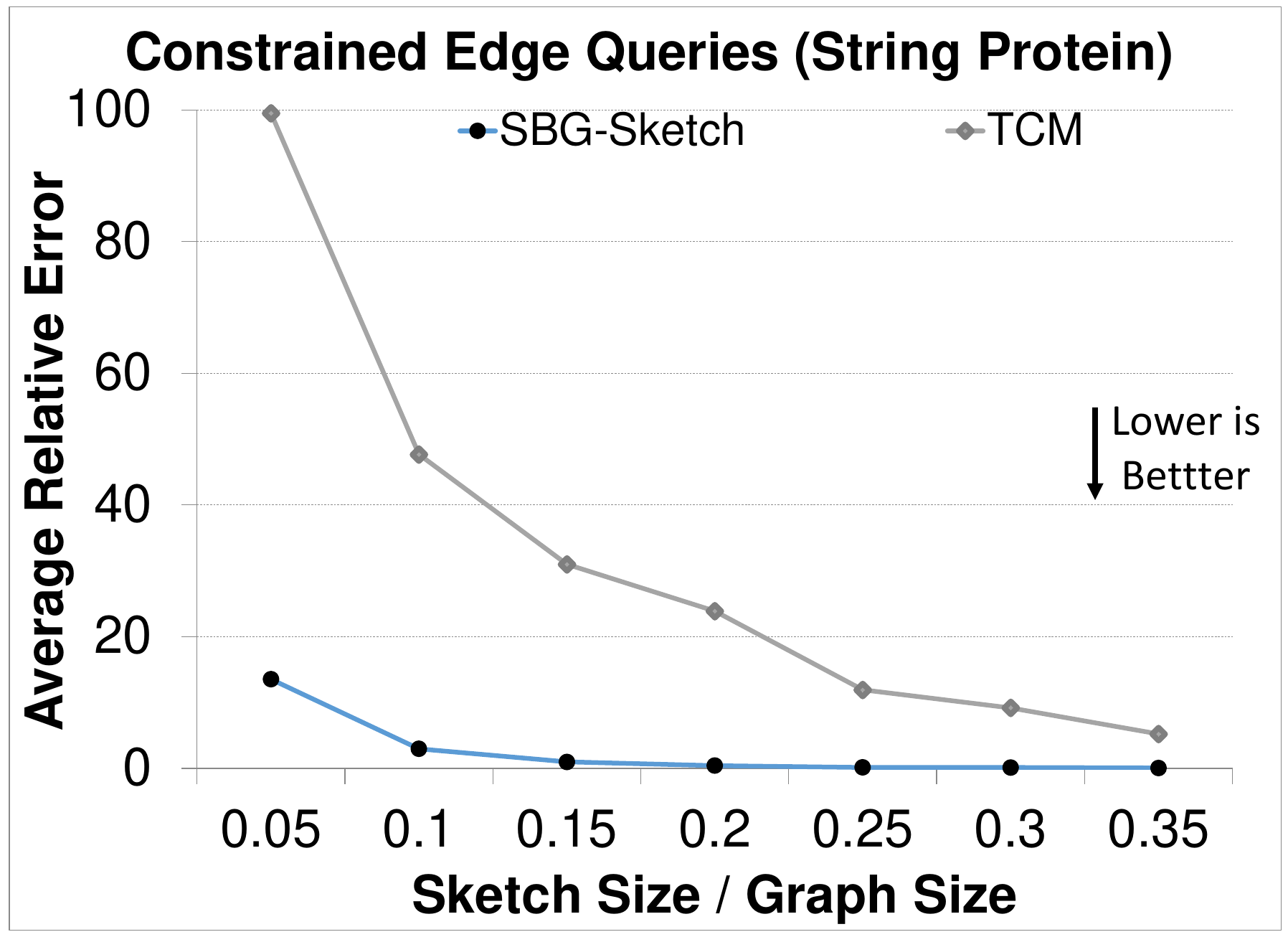}	\label{Fig:Edge_String}}
\caption{\ourSketch reduces the estimation error of TCM by up to 99\% in approximating constrained edge queries.}
\label{Fig:Edge_Queries}
\end{figure*}

\begin{figure}[t]
\centering
\includegraphics[width=2.2in]
{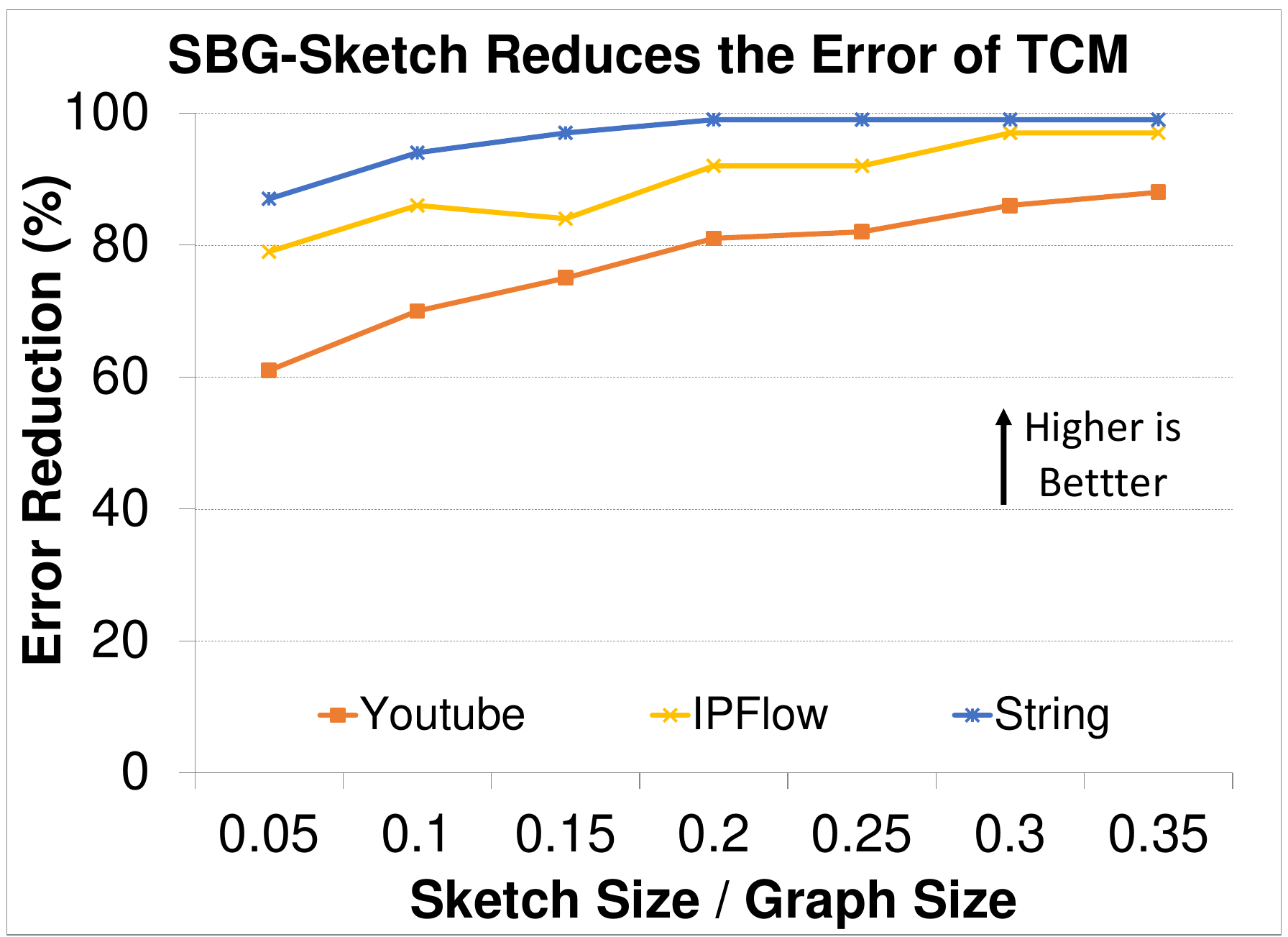}
\caption{\ourSketch{} increases the accuracy significantly of approximating constrained edge-queries.}
\label{Fig:Edge_ErrorReduction}
\end{figure}

\subsubsection{Using Multiple Hash Functions}
\label{sec:ExpEdgeQueries_VaryHash}

In this experiment, we measure the effect of using multiple sketches. Each sketch uses a different hash function to hash the vertexes into rows and columns of its matrices. The hash functions form a set of pairwise-independent hash functions.
In particular, we vary $P$, the number of hash functions (as in Figure~\ref{Fig:GeneralSketchStructure}), while fixing the memory size of each sketch. Hence, the total memory size increases with $P$. There are two reasons behind the setup of this experiment. First, we need to study the effect of increasing the number of hash functions 
while
fixing all the other parameters including the dimensions of each matrix. Second, the setup is consistent with the same experimental setup reported by TCM~\cite{TCM_SIGMOD_2016}. 

In this experiment, we use the same set of queries described in Section~\ref{sec:ExpEdgeQueries}, namely $Q_{10k}$. 
The $10k$ query set executes after inserting into the sketch the entire datasets described in Table~\ref{table:Datasets}. We fix the size of a single sketch to be $0.1$ of the size of each queried dataset, and we vary the number of the hash functions from $1$ to $7$.  
Figure~\ref{Fig:EdgeMultiHash} gives the average relative-error when running $Q_{10k}$ as formerly described using each dataset of Table~\ref{table:Datasets}. The average relative-error decreases as the number of pairwise-independent hash functions increases for both methods. However, \ourSketch{} consistently provides better accuracy than that of TCM. The reason is that each sketch hashes the same edges differently, and this allows the edges to collide differently in each sketch. At query time, the results from all the sketches are used, and the most accurate one dominates as the final result (as explained in Section~\ref{sec:queryEstimation}). Notice that each sketch is updated and is queried independently. The advantage of processing the sketches independently is twofold. First, the accuracy is enhanced as each sketch summarizes the graph stream differently. Second, updating and querying the sketches can be 
preformed 
in parallel, which allows performance gains.

\begin{figure*}[ht]
\centering
    \subfigure[IPFlow dataset]{\includegraphics[width=2.24in]{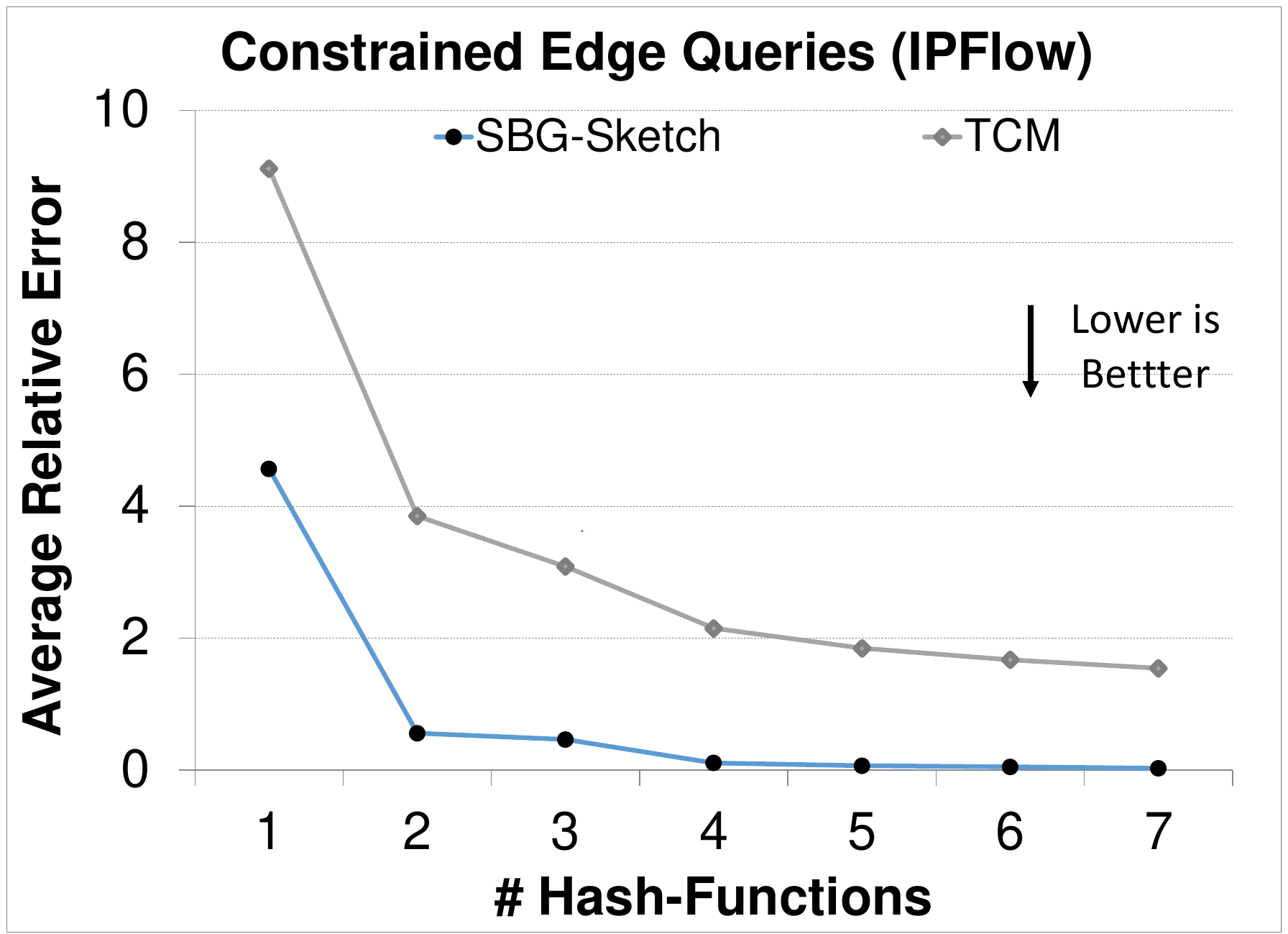}
	\label{Fig:Edge_MultiHash_IPFlow}
	}
   \hspace{0.01in}
\subfigure[Youtube dataset]
{\includegraphics[width=2.24in]{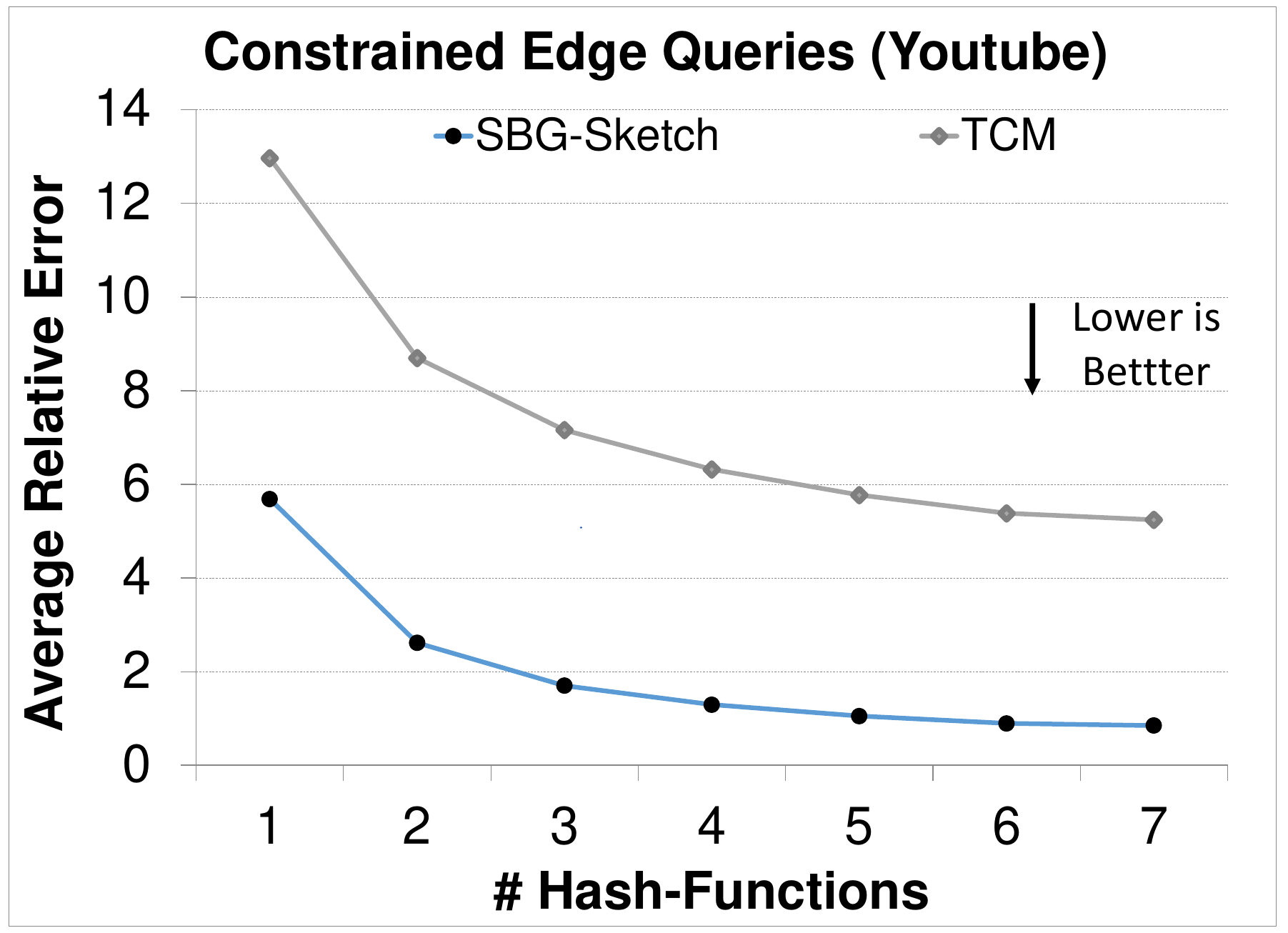}	\label{Fig:Edge_MultiHash_Youtube}}
\hspace{0.01in}
\subfigure[String dataset]
{\includegraphics[width=2.24in]{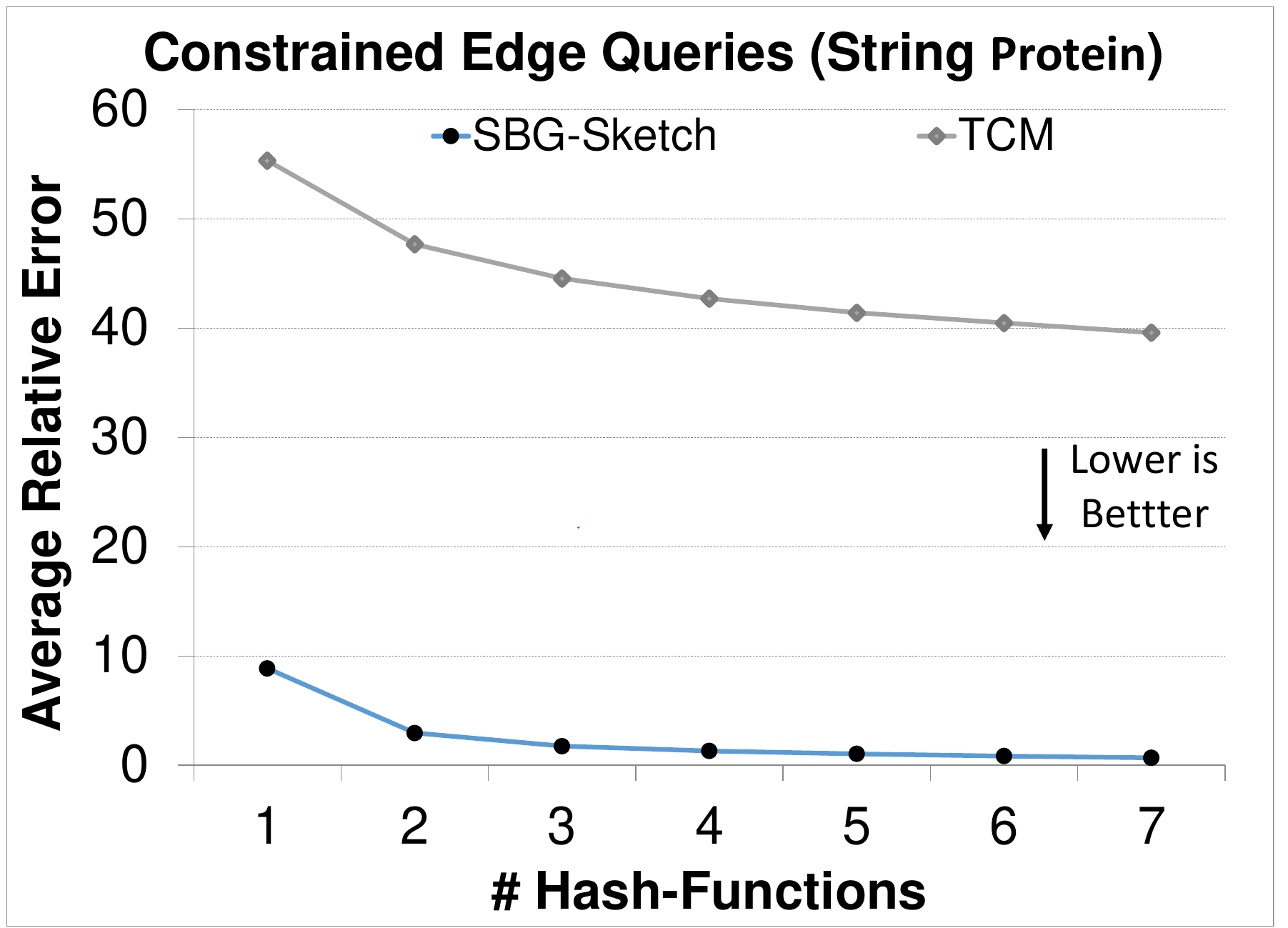}	\label{Fig:Edge_MultiHash_String}}
\caption{The estimation accuracy increases as the number of hash functions increases.}
\label{Fig:EdgeMultiHash}
\end{figure*}


\subsection{Constrained Sub-Graph Queries}
\label{sec:ExpSubGraphQueries}

In this set of experiments, we use all the datasets in Table~\ref{table:Datasets} to evaluate the accuracy of estimating constrained sub-graph queries. We generate $10,000$ constrained sub-graph queries randomly with different variations (triangle queries, paths of different lengths, and connected sub-graphs). We measure the accuracy of \ourSketch{} w.r.t. TCM for various sketch sizes while fixing the number of hash functions to two (i.e., $P$ = 2). It is expected to get results that comply with the results of the edge-queries  in Figure~\ref{Fig:Edge_Queries} as the edge query logic is used as a primitive to evaluate sub-graph queries. This set of experiments confirms this expectation as illustrated in Figure~\ref{Fig:Subgraph_Queries}. However, 
the average relative-error reduces for both \ourSketch{} and TCM 
in contrast to edge queries. 
We attribute this reduction to the conceptual evaluation of the sub-graph queries (see Section~\ref{sec:subGraphQueries}).
In particular, the query result is dominated by the query edge of least frequency. 
Hence, the relative error on average decreases
in contrast to the error 
in 
estimating individual edge queries. Notice that \ourSketch{} always reduces the estimation error %
over 
TCM (refer to 
Section~\ref{sec:ExpEdgeQueries}), where the ranking logic of \ourSketch{} leverages more cells than TCM to reduce collisions 
in the presence 
of skewed-label distributions. Notice that \ourSketch{} handles 
without any 
advance knowledge of 
label distribution.

\begin{figure*}[ht]
\centering        
    \subfigure[IPFlow dataset]{\includegraphics[width=2.24in]{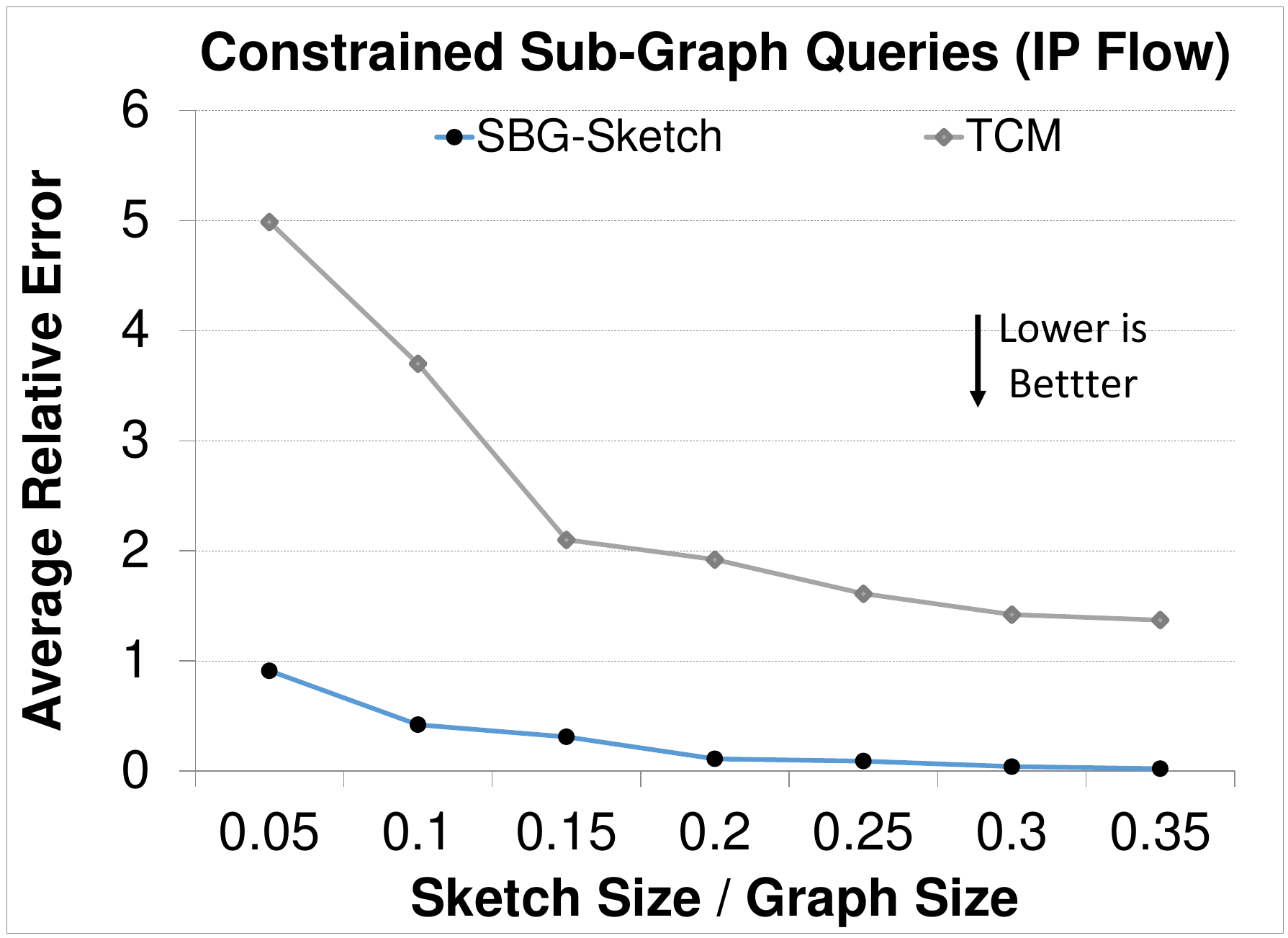}
	\label{Fig:Subgraph_IPFlow}
	}
    \hspace{0.01in}
    \subfigure[Youtube dataset]{\includegraphics[width=2.24in]{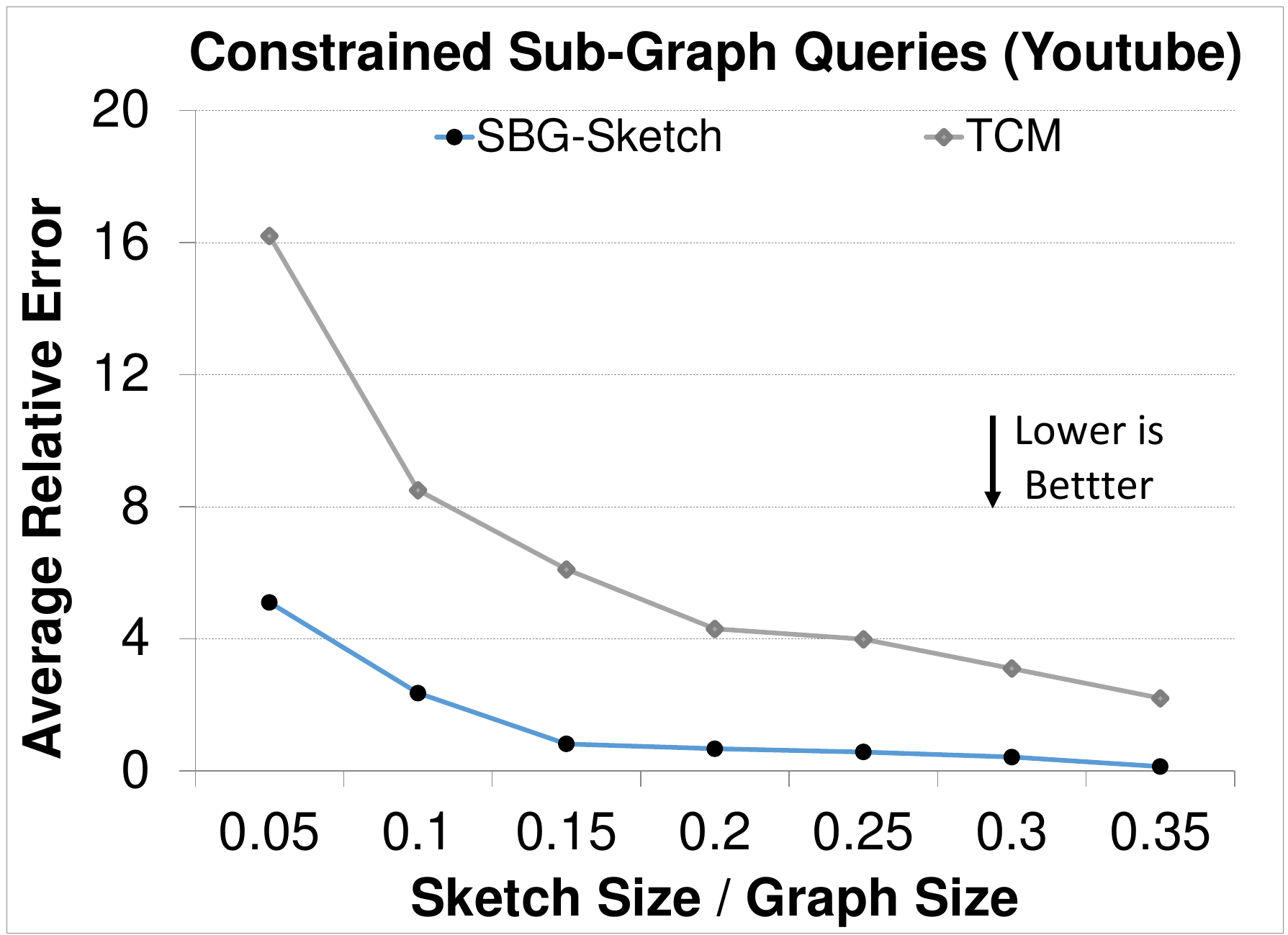}
	\label{Fig:Subgraph_Youtube}
	}
    \hspace{0.01in}
    \subfigure[String dataset]{\includegraphics[width=2.24in]{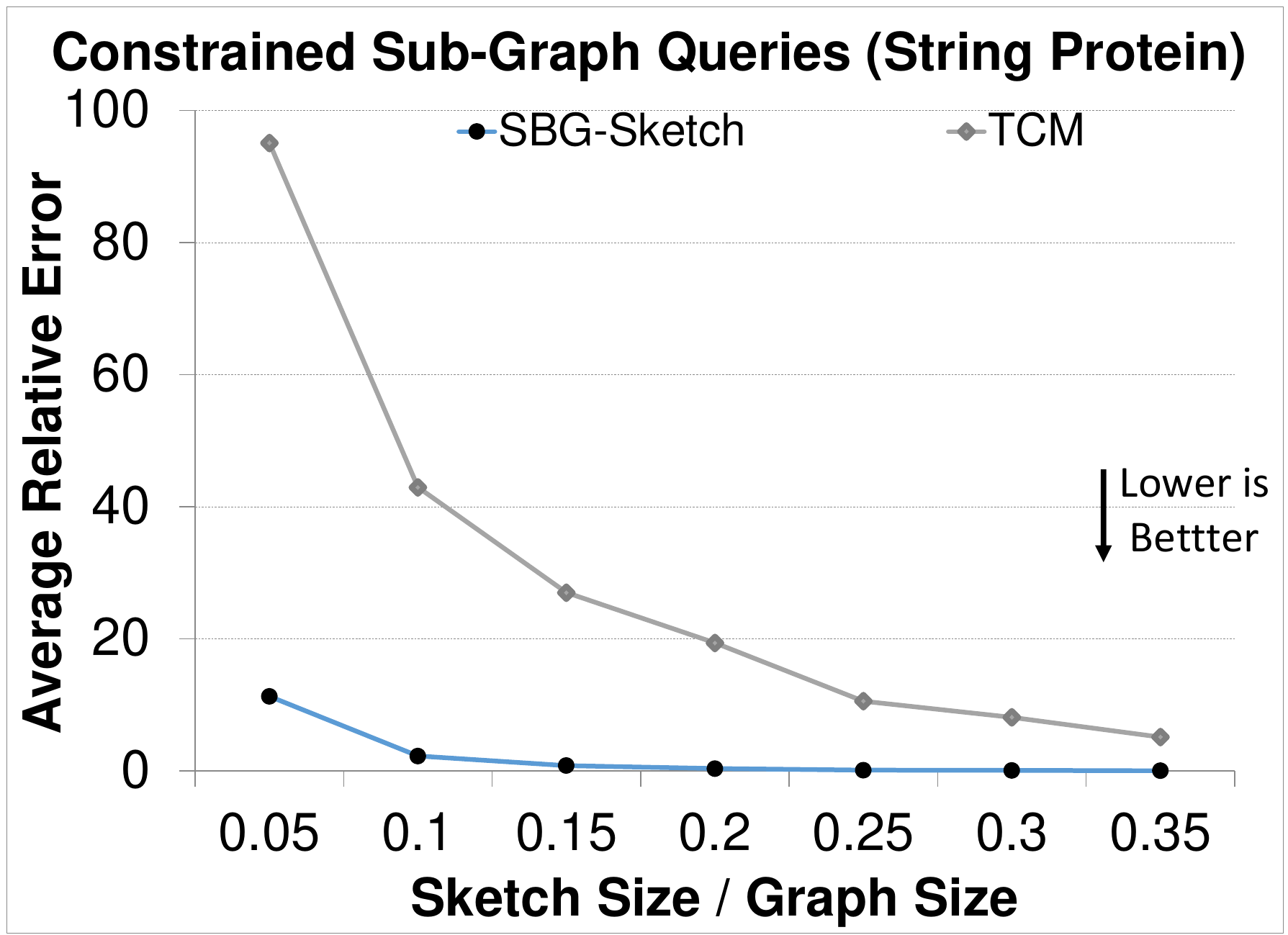}
	\label{Fig:Subgraph_StringDB}
	}
\caption{\ourSketch reduces the estimation error of TCM by up to 84\% in sub-graph query estimation.}
\label{Fig:Subgraph_Queries}
\end{figure*}



\subsection{Constrained Reachability Queries}
\label{sec:ExpReachabilityQueries}

In this set of experiments, we measure the effectiveness of \ourSketch{} to estimate constrained reachability queries. Notice that a reachability query that evaluates to true on the original graph will always evaluate to true using a sketch of that graph. This is true for both \ourSketch{} and TCM as they both keep all the connectives of the input graph streams. However, due to edge collisions, both methods are vulnerable to false positives, i.e., a reachability query that evaluates to false on the original graph might be estimated as true using a sketch of the original graph. Hence, in this set of experiments, we generate random constrained reachability queries with actual results of $false$ on the original graphs (i.e., not reachable), and we measure how many of them are detected as unreachable by both \ourSketch{} and TCM, i.e., we measure the recall of the true-negatives, which is similar to the metric used in~\cite{TCM_SIGMOD_2016} to evaluate the effectiveness of TCM on estimating reachability queries.

We evaluate the true-negative recall of constrained-reachability queries using all the datasets listed in Table~\ref{table:Datasets}. We generate $1000$ random reachability queries, say $Q_{rset}$, where each query is constrained to use up to half the labels of the queried graphs. We ensure that all the generated queries are not reachable in the original graphs. We run Query-set~$Q_{rset}$ with different sketch-size factors on the x-axis of Figure~\ref{Fig:Exp_Reach_Queries} while fixing the number of hash functions to two. The y-axis gives the percentage of the true-negatives recall (the higher the better). 
Figure~\ref{Fig:Exp_Reach_Queries} illustrates that the accuracy of \ourSketch{} in recalling true-negatives is very effective even 
for 
small sketch sizes.
Figure~\ref{Fig:Reach_IPFlow} illustrates that \ourSketch{} and TCM have accuracy of $70.8\%$, and $9.1\%$, respectively, when fixing the sketch size to only $0.05$ of the graph stream size (i.e., \ourSketch{} is $7.8x$ more accurate than TCM).
\ourSketch{} estimates correctly 
over 
$90$\% of the queries when setting the sketch size to 
$0.1$ of the graph size for the IPFlow and the String datasets (see Figure~\ref{Fig:Reach_IPFlow} and Figure~\ref{Fig:Reach_StringDB}), where the accuracy reaches up to $99.6$\%. We attribute this gain in accuracy to the ranking logic of \ourSketch{} that automatically balances 
the filling of the sketch matrices, and overcomes the issue of skewed labels, and hence, decreasing the overall hash-collision rates.

\begin{figure*}[ht]
\centering        
    \subfigure[IPFlow dataset]{\includegraphics[width=2.24in]{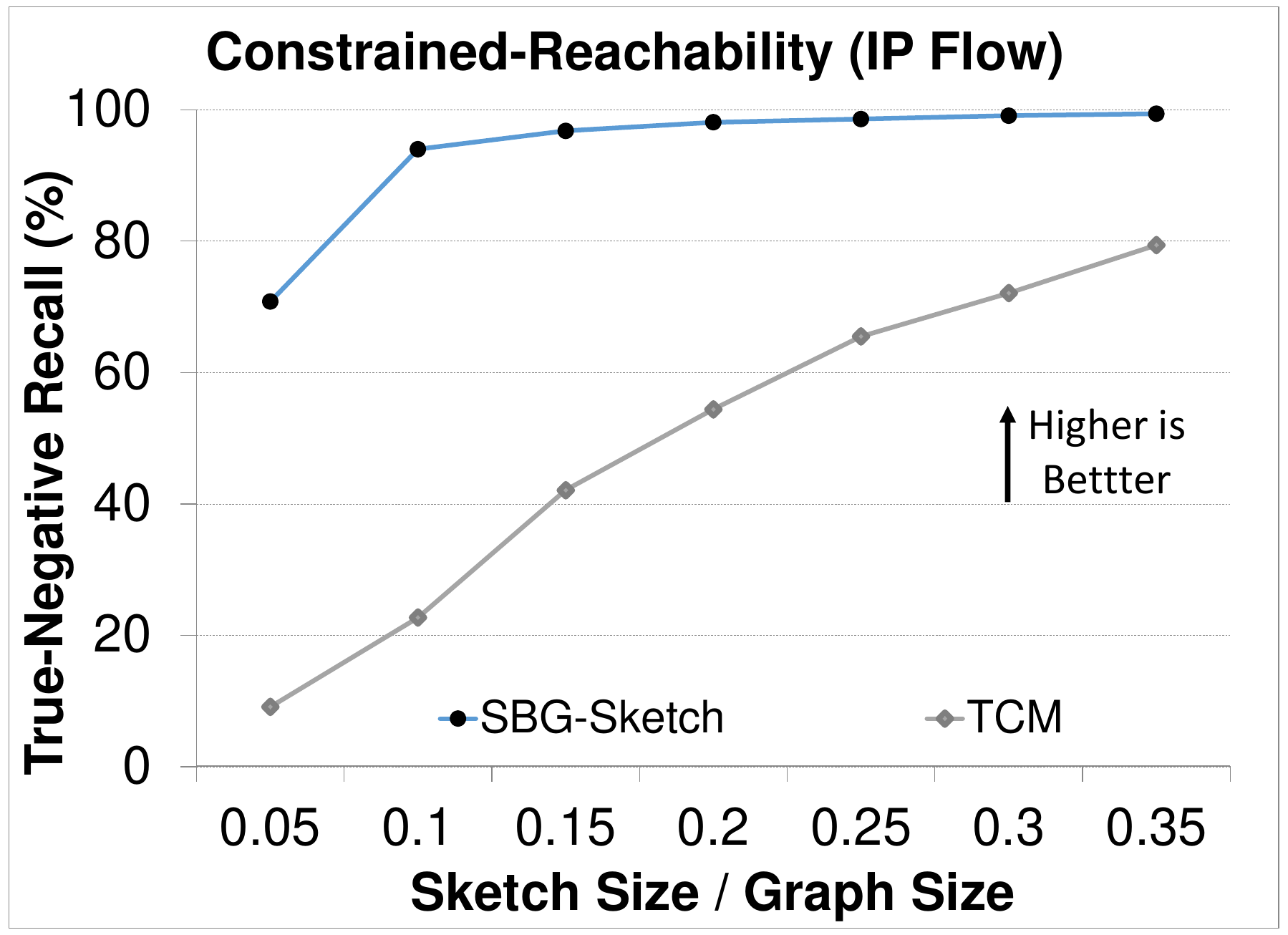}
	\label{Fig:Reach_IPFlow}
	}
    \hspace{0.01in}
    \subfigure[Youtube dataset]{\includegraphics[width=2.24in]{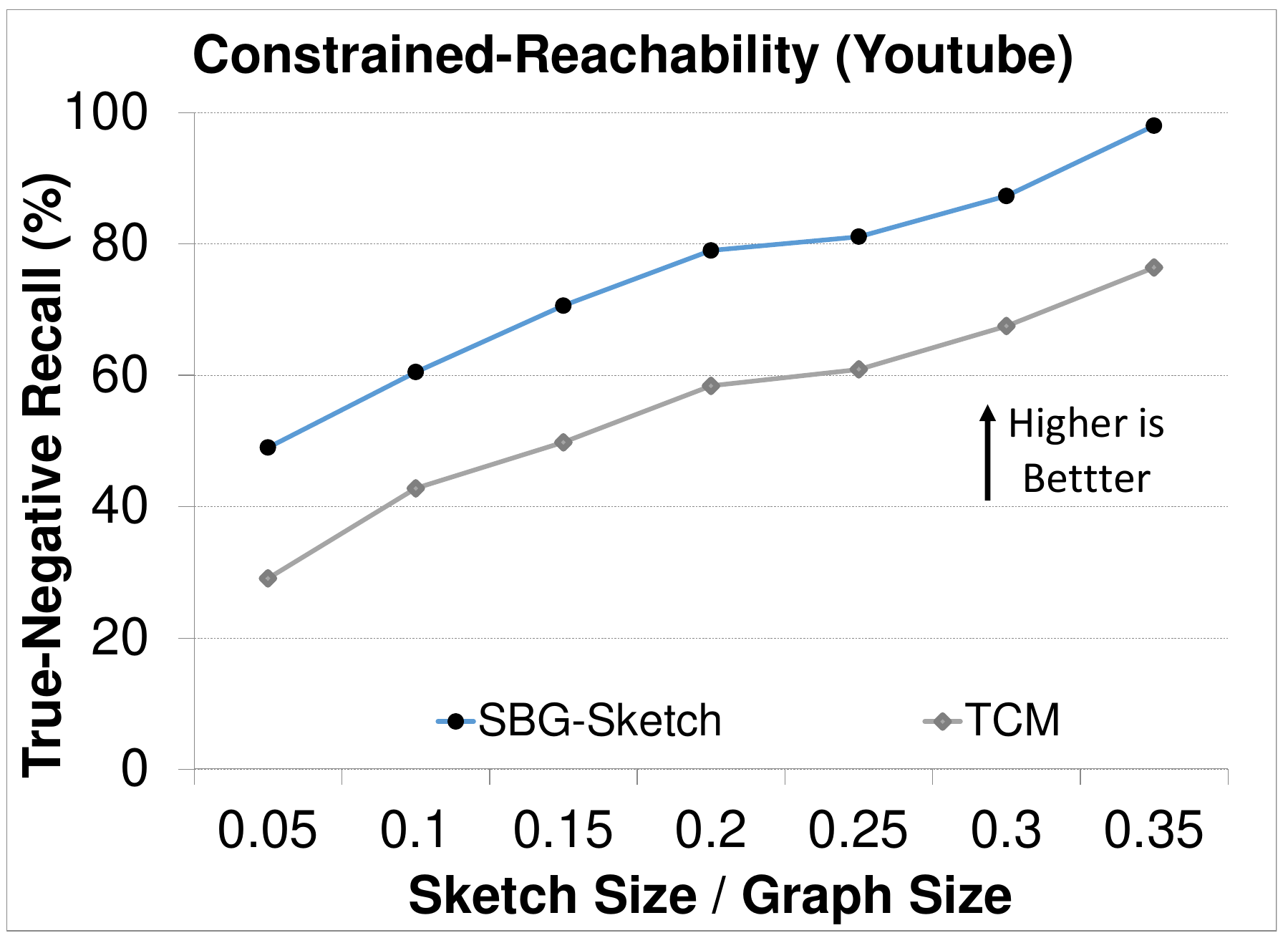}
	\label{Fig:Reach_Youtube}
	}
    \hspace{0.01in}
    \subfigure[String dataset]{\includegraphics[width=2.24in]{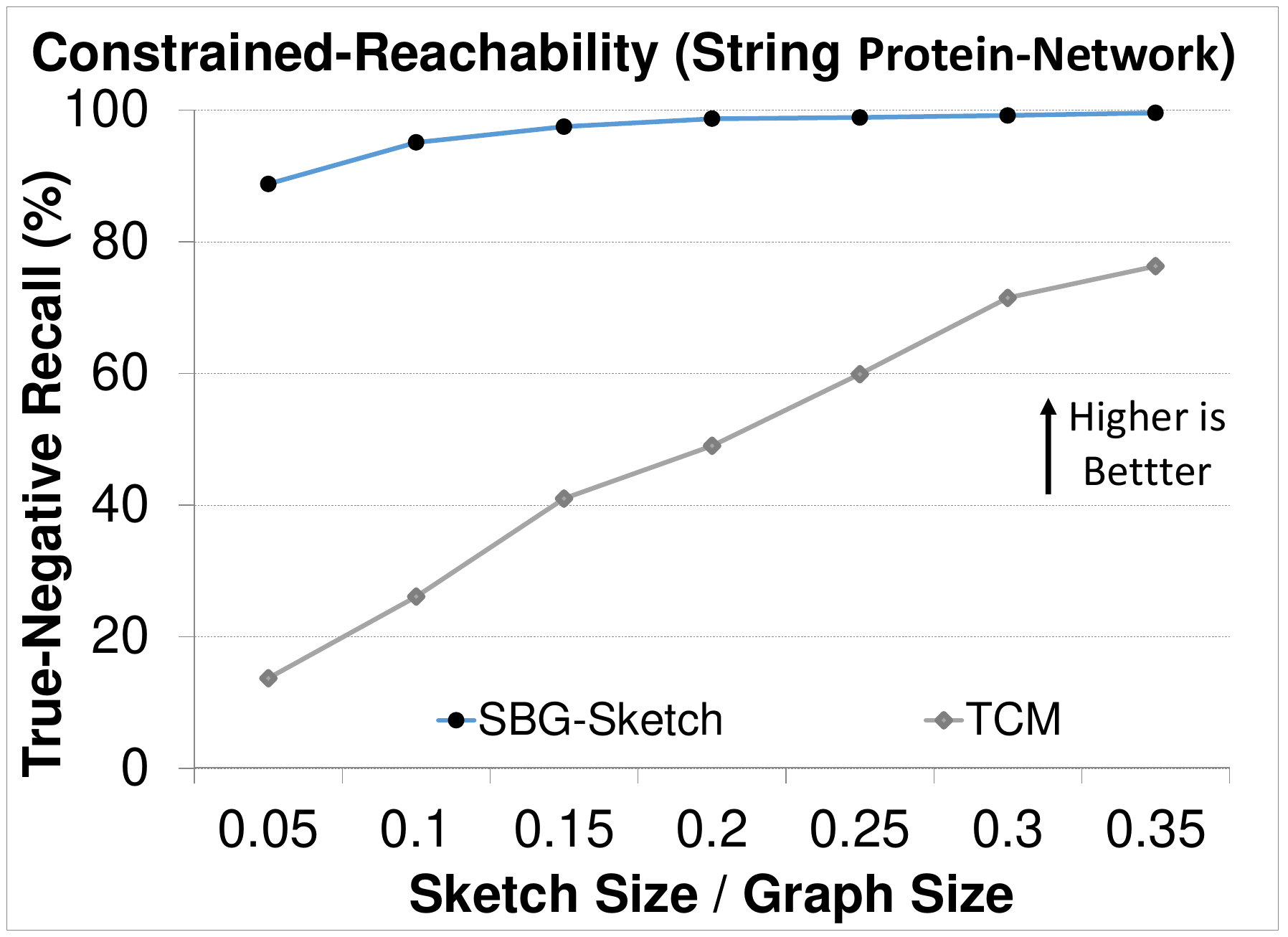}
	\label{Fig:Reach_StringDB}
	}
\caption{\ourSketch{} estimates the true-negative of constrained-reachability queries by up to 99.6\% in accuracy.}
\label{Fig:Exp_Reach_Queries}
\end{figure*}

\subsection{Processing-Time Efficiency}
\label{sec:ExpConstructionTime}

In this set of experiments, we measure the time of constructing \ourSketch{} for each dataset in Table~\ref{table:Datasets}. Notice that inserting edges into \ourSketch{} during  sketch construction  has the same time-complexity as evaluating edge queries, and the same holds for TCM~\cite{TCM_SIGMOD_2016}. As \ourSketch{} performs more logic related to rank-values maintenance, we expect \ourSketch{} to take additional construction time 
in contrast to 
that of TCM. In this experiment, we fix the sketch-size to be $0.1$ of the graph size and compare the construction time of both \ourSketch{} and TCM. The y-axis in Figure~\ref{Fig:Performance_Construction} gives the construction time in milliseconds for the three datasets listed in Table~\ref{table:Datasets}. Notice that the construction time of \ourSketch{} is comparable to the simpler construction logic of TCM. We observe an average of $28$\% time increase over all datasets. This construction-time increase is 
acceptable 
given the significant gain in accuracy in \ourSketch{}.

\begin{figure}
\centering
\includegraphics[width=2.2in]
{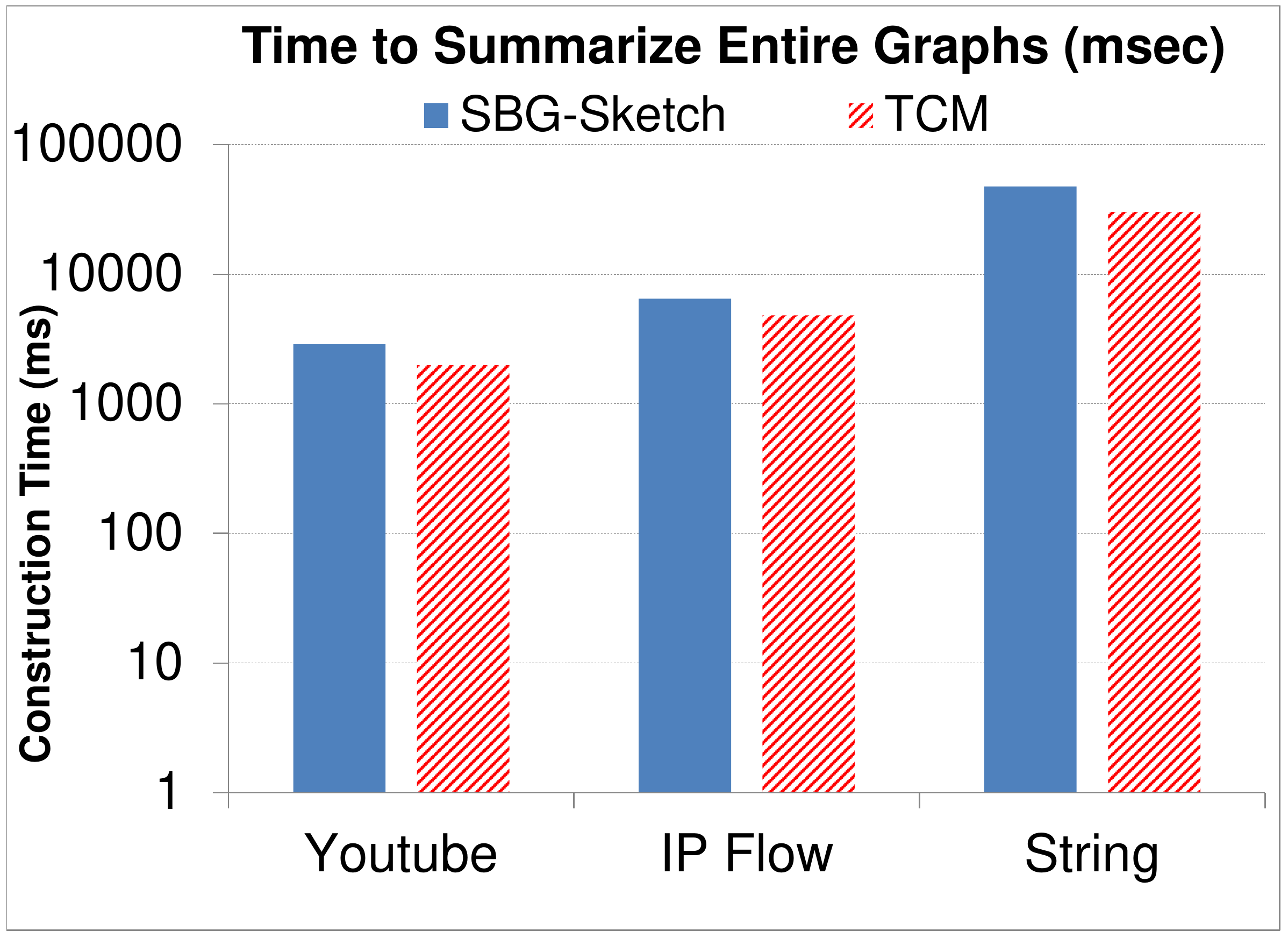}
\caption{\ourSketch{} has an average increase of 28\% in sketch-construction time comparing to TCM.}
\label{Fig:Performance_Construction}
\end{figure}

\section{Related Work}
\label{sec:RelatedWork}

The related work to \ourSketch{} can be divided into two categories: (1)~sketches for general streams, and (2)~sketches for graph streams. In the first category, various research work has been proposed
, e.g., Ada-sketch~\cite{Cat1_AdaSketches}, CountMin~\cite{CountMin_Algo_2005}, AMS\cite{Cat1_AMS}, Bottom-k~\cite{Cat1_Bottomk}, and Lossy-Counting\cite{Cat1_LossyCounting}. However, the research efforts of the first category are not optimized for graph streams (see~\cite{gSketch_VLDB_2011}).
\ourSketch{}, our proposed method, is designed to summarize labeled-graph streams effectively.
It is important to note that the eviction ranking mechanism of \ourSketch{} is not related to set membership sketches, e.g., Bloom filters~\cite{kirsch2008less}. Bloom filters do not perform evictions or have rankings.

In the second category, the research efforts focus on processing graph queries over data streams that form graph structures (e.g., ~\cite{Cat2_Triangles,Cat2_SP,gSketch_VLDB_2011,TCM_SIGMOD_2016}).
In~\cite{Cat2_Triangles}, graph queries that count the
number of triangles are addressed, and~\cite{Cat2_SP} supports shortest-path queries. However, both~\cite{Cat2_SP,Cat2_Triangles} and similar theoretical work (e.g.,~\cite{Cat2_DistinctCountNodes}) focus on providing theoretical bounds that may not scale for large graphs.
gSketch~\cite{gSketch_VLDB_2011} extends the idea of the Count-Min sketch~\cite{CountMin_Algo_2005} to compute edge-frequency queries. To construct a sketch, gSketch requires either a sample of the graph stream or both a graph-stream sample and a query-workload sample. gSketch considers only unlabeled-graph streams. In contrast, \ourSketch{} neither requires edge samples nor query-workload samples to summarize labeled-graph streams. In addition, \ourSketch{} supports graph-traversal queries that are not considered by gSketch for its supported graph model. Notice that this category does not consider the graph summarization techniques that are not designed for streaming scenarios (e.g.,~\cite{Navlakha2008,GS_2001_Adler,GS_2001_Suel,GS_2003_Raghavan,GS_Book,GS_2012_Fan}). The reason is that these techniques do not support the continuous arrival of edges in streaming applications as discussed in~\cite{TCM_SIGMOD_2016}.

The most related work to \ourSketch{} is TCM~\cite{TCM_SIGMOD_2016}. The main motivation of TCM is to support graph-traversal queries. TCM builds $K$ independent matrices, where each matrix has two dimensions. Each matrix uses an independent hash function to summarize the graph stream (i.e., the graph summary is created $K$ times with different hash functions). A cell in a TCM sketch is addressed by the endpoints of a given edge to update the sketch on edge arrivals to summarize the graph topology along with an edge attribute. However, TCM is not optimized to handle labeled-graph streams. \cite{TCM_SIGMOD_2016} describes without evaluation how TCM can handle graphs with different type of edges (i.e., labeled-graph streams). In particular, \cite{TCM_SIGMOD_2016} suggests to create a matrix for each edge type. However, this approach does not handle the common edge-skewness w.r.t. the edge labels. Moreover, the edge-skewness may not be known beforehand, and may change with time to make allocating different memory sizes for each label impractical. In contrast, \ourSketch{} handles labeled-graph streams efficiently by reducing the error rate of {\em TCM} by up to $99$\%. Moreover, \ourSketch{} does not require any pre-knowledge about the edge distribution. 

\section{Conclusion}
\label{sec:Conclusion}

\ourSketch{} is a graphical sketching method that summarizes labeled graph streams, where the graph topology is considered in the summary. It assumes a stream, where each edge has one label. \ourSketch{} addresses the consequences of having unbalanced edge-distribution w.r.t. the edge labels. This is achieved by presenting and evaluating a ranking technique. Given a fixed sketch-size, the proposed ranking technique allows \ourSketch{} to automatically adapt to the unbalanced labels of the streamed edges by allowing an edge to use more than one matrix based on its ranks. Moreover, it guarantees that all the edges gain in summarization accuracy even if their labels are relatively-rare. We demonstrate how \ourSketch{} can be used to approximate several graph-query types that depend on an aggregation of an edge attribute and/or the topology of the graph. The experimental study over three real labeled-graphs spanning different domains show that \ourSketch{} reduces the estimation error of the state-of-the-art by up to $99$\%.

\begin{small}
\bibliographystyle{abbrv}
\bibliography{references}
\end{small}

\appendix

\section{Proofs}

\subsection{Proof of Theorem~\ref{t:boundEdge}}
\label{sec:Proof_BoundEdge}
\begin{nonumtheorem}
Let $L$ be the number of priorities and $X_e$ be the number of arrivals of edge $e \in V\times V$ during an observation window, where $V$ is the set of vertexes in the graph stream. 
Let $P \geq 1$ be the number of $P$-independent hash functions used in \ourSketch{}.
Let $1/(1+\alpha)$ be the percentage reduction in the number of sketch counter cells due to the inclusion of \ourSketch priority counters.
Furthermore, 
let $X_e^\text{(\ourSketch)} \geq X_e$ be the upper bound on $X_e$ given by \ourSketch{} and let $X_e^\text{(TCM)} \geq X_e$ be absolute error distribution given by TCM with the same number of sketch counter cells. Assume that 
edges arrive according to a nonhomogeneous Poisson process (a Poisson process that varies over time) with average rate $\lambda_e$ over the observation window.
The observation window is defined to be of length one by an appropriate change of units. 
Then, the distribution of the absolute error is
\begin{align*}
&\text{Pr}[X_e^\text{(\ourSketch)} - X_e > k]  < \left(\text{Pr}[X_e^\text{(TCM)} - X_e > k ]   - \zeta_{k,L,P} \right)^P \, ,
\end{align*}
where 
\begin{align*}
& \zeta_{k,L,P} = \sum_{j=k+1}^{K_\text{max}}\frac{((1 + \alpha)P\tilde{\lambda}_0)^j}{j!} \exp\left( -(1 + \alpha)P\tilde{\lambda}_0\right) \\
 &\times \Bigg( 1 - \prod_{i=1}^{L-1} \bigg( 1 - \sum_{k_i=0}^{k} \left(\frac{1}{L-1}\right)^{k_i} \left(\frac{L-1-i}{L-1}\right)^{j - k_i } \binom{j}{k_i} \\ 
&\qquad\qquad\qquad\times  \exp\left( - (i+1) \frac{P\tilde{\lambda} (1+\alpha)}{L} \right) \bigg) \Bigg)
\end{align*}
and
$\tilde{\lambda}_0 = d^{-2}\sum_{h \in V\times V} \lambda_h {\bf 1}\{\text{$h$ has same label as $e$}\}$, $\tilde{\lambda} = d^{-2}\left(\sum_{h \in V \times V } \lambda_h - \lambda_0 \right)$, $K_\text{max} \gg k$ is an arbitrary constant, and {\bf 1} is the Kronecker delta function.
\end{nonumtheorem}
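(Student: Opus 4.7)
The plan is to reduce the single-sketch analysis of \ourSketch{} to that of TCM plus a correction term, and then combine $P$ independent sketches via independence. First I would establish the \emph{home-matrix reduction to TCM}: for an edge $e$ of label $\ell$, the rank-injection rule of Section~\ref{sec:RankGeneration} assigns $e$ the top rank~$0$ in matrix $M_\ell$, so the cell $(H_v(s_e), H_v(d_e))$ of $M_\ell$ aggregates exactly the same set of collisions that a TCM matrix of identical dimensions would aggregate for $e$. Consequently, if \ourSketch{} ignored the remaining $L-1$ matrices, the per-sketch estimator would be distributed identically to $X_e^{(\text{TCM})}$. The $(1+\alpha)$ factor in $\tilde\lambda_0$ and $\tilde\lambda$ enters here because the priority bytes reduce the number of addressable cells in \ourSketch{} by a factor $1/(1+\alpha)$ and therefore inflate the per-cell Poisson intensity correspondingly.

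Next I would quantify the \emph{rescue} provided by the other $L-1$ matrices. Since $X_e^{(\ourSketch,p)}$ in sketch~$p$ is the minimum over the candidate cells in all matrices that $e$'s rank vector permits it to read, the error exceeds $k$ only when the home-matrix cell exceeds $X_e+k$ and simultaneously no other-label matrix contains a cell for $e$ with aggregate at most $X_e+k$. Writing $A_p$ for the rescue event ``some other-label matrix cell accessible to $e$ in sketch $p$ holds $\le k$ surviving non-$e$ arrivals,'' I would show
\begin{align*}
\text{Pr}[X_e^{(\ourSketch,p)} - X_e > k]
&\le \text{Pr}[X_e^{(\text{TCM})} - X_e > k] - \text{Pr}[A_p] \\
&\le \text{Pr}[X_e^{(\text{TCM})} - X_e > k] - \zeta_{k,L,P},
\end{align*}
provided $\zeta_{k,L,P}$ is indeed a valid lower bound on $\text{Pr}[A_p]$, which is precisely the content of the formula for $\zeta_{k,L,P}$ in the statement.

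The bulk of the work is then the derivation of $\zeta_{k,L,P}$. Using Poisson thinning on the nonhomogeneous arrival process over the observation window of length one, the counts of same-label and other-label arrivals routed by the hashes $H_v$ and $H_R$ into any fixed cell become independent Poisson variables. The outer Poisson-pmf factor conditions on the number $j>k$ of same-label collisions, i.e., the regime where the home matrix already overestimates by more than $k$. For each foreign-label rank level $i\in\{1,\dots,L-1\}$, I would bound the probability that fewer than $k+1$ of those $j$ same-label arrivals are injected into the rank-$i$ slot of the corresponding foreign-label matrix (the $(1/(L-1))^{k_i}((L-1-i)/(L-1))^{j-k_i}\binom{j}{k_i}$ factor) \emph{and} that the at most $k$ such arrivals survive eviction by higher-ranked arrivals with Poisson intensity proportional to $(i+1)P\tilde\lambda(1+\alpha)/L$ (the exponential factor). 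Independence across rank levels yields the product over $i$, and taking the complement gives the bracketed factor; the outer sum then averages over $j$.

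Finally, since the $P$ hash functions are pairwise-independent, the per-sketch error events are independent, and $X_e^{(\ourSketch)} = \min_{p} X_e^{(\ourSketch,p)}$. The error of the minimum exceeds $k$ iff each individual error exceeds $k$, so raising the single-sketch bound to the $P$-th power yields the claimed inequality, with strictness inherited from $\zeta_{k,L,P}>0$ whenever $L\ge 3$ and $k<K_\text{max}$. The main obstacle I anticipate is the combinatorial bookkeeping in the $\zeta_{k,L,P}$ derivation: I need to track how rank injection is jointly governed by the edge's label and by the independent hash $H_R$ without double-counting arrivals that are correlated across the $L-1$ rank slots inside a single cell, and to justify that the Poisson-thinning approximation yields a valid \emph{lower} bound on $\text{Pr}[A_p]$ (not merely an approximation) so that the subtraction in the single-sketch bound remains rigorous.
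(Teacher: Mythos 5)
Your proposal follows essentially the same route as the paper's proof: the home matrix of $e$'s label is identified with a TCM counter under the $(1+\alpha)$-inflated load, the other $L-1$ matrices contribute a ``rescue'' probability lower-bounded by $\zeta_{k,L,P}$ via conditioning on $j>k$ same-label collisions, a multinomial spread of those collisions over the foreign rank slots, an exponential eviction-survival factor, and finally the $P$-th power from the minimum over independent sketches. The one place you go beyond the paper is in flagging the correlation of the $k_i$ counts across the $L-1$ foreign matrices (they are a multinomial split of the same $j$ arrivals) as a step needing justification before taking the product over $i$; the paper simply asserts the product form $1-\prod_i(1-\gamma_i)$ without addressing this.
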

\begin{proof}
In what follows we say an edge $e'$ has ``higher priority'' than an edge $e^\star$ at sketch matrix $M$ if the priority number of $e'$ is smaller than that of $e^\star$ in $M$.
We start with the case $P=1$, one hash function.
The number of arrivals of an edge $e \in V\times V$ in the observed time window is a Poisson distributed random variable $X_e \sim \text{Poisson}(\lambda_e)$.
Let $X^{(\ourSketch)}_e$ be upper bound on $X_e$ returned by \ourSketch{}. In what follows we condition on edge $e$ having at least one arrival $X_e > 0$ in the observation time window. 
Note that the difference $X_e^\text{(\ourSketch)} - X_e$ is due to the collision between $e$ and other edges.
Without loss of generality we define $M_0$ to be the matrix that edge $e$ has priority 0, i.e., the matrix of label of edge $e$. 
Let $X^{(TCM)}_e$ be upper bound on $X_e$ returned by TCM assuming a $(1+\alpha)$ increase in counter load:
\[
\text{Pr}[X_e^\text{(TCM)} - X_e \leq k ] = \sum_{j=0}^{k}\frac{((1 + \alpha)p\tilde{\lambda}_0)^j}{j!} \exp\left( -(1 + \alpha)W\tilde{\lambda}_0\right)
\]
Note that the probability $\text{Pr}[X_e^\text{(\ourSketch)} - X_e \leq k] $ 
is the probability that the arrivals in $M_0$ are at most $k$, $\text{Pr}[X_e^\text{(TCM)} - X_e \leq k ]$, or there are more than $k$ arrivals and these extra edge arrivals are distributed into the matrices of other labels $M_1,\ldots,M_L$.
Without loss of generality let $M_i$ be the sketch where edge $e$ has priority $i$. The probability that the counter values will have values less than $k$ in $M_i$, $i \in \{1,\ldots,L\}$ from $j > k$ arrivals at $M_0$ is
\begin{equation}\label{e:2}
\gamma_i = \sum_{k_i=0}^{k} \left(\frac{1}{L-1}\right)^{k_i}\left(\frac{L-1-i}{L-1}\right)^{j - k_i } 
 \frac{j!}{k_i! (j - k_i)!} \, .
\end{equation}
The probability that some $M_i$ will have less than $k$ collision is then $1- \prod_{i=1}^{L-1}(1-\gamma_i)$.
The probability the counter containing $e$ survives an eviction from higher priority edges is
\begin{equation}\label{e:1}
\text{Pr}[e \text{ is not evicted from sketch }M_i] = \exp\left( - i \frac{\tilde{\lambda} (1+\alpha)}{L}\right) ,
\end{equation}
where $\tilde{\lambda} = \left(\sum_{h \in V \times V } \lambda_h - \lambda_0 \right)/d^2$ is the rate of all edge arrivals except edges with the same label as edge $e$.   
A same priority edge can also collide with $e$ at $M_i$. While this does not mean there will be more than $k$ collisions with $e$, we just assume we do not want any further collisions to get a lower bound, multiplying the above by $\exp\left( - \frac{\tilde{\lambda} (1+\alpha)}{L}\right)$.
Collecting all the terms we get the equation for $P=1$ hash functions.

To consider $p \geq 1$ hash functions, we observe that having $p$ hash functions also increases the arrival rate per counter, multiplying it by $p$.
On the other hand, because we assume the hash functions are $p$-independent, because $X_e^\text{(\ourSketch)}$ is the minimum value over all the sketches of $p$-independent different hash functions, the probability that for all the hash functions we have $X_e^\text{(\ourSketch)} - X_e > k$ is  $\text{Pr}[X_e^\text{(\ourSketch)} - X_e > k]^P$, which concludes our proof.
\end{proof}

\subsection{Proof of Theorem~\ref{Theory:zeroEstimation}}
\label{sec:Proof_ZeroEstimate}
\begin{proof}
The proof is by contradiction. Assume that Edge~$E = (a,~b,~L_i)$ was inserted into \ourSketch{}. Then, all the candidate cells of Edge~$E$ have ranks that are either equal to or higher than the corresponding ranks of Edge~$E$ (see Lines~$7-12$ of Algorithm~\ref{Algo:UpdateEdgeQuerySketch}). So, when the edge-query estimator hits a cell with a rank that is lower than the corresponding rank of Edge~$E$, then this contradicts that Edge~$E$ was received before.
\end{proof}

\end{document}